
\documentclass{article}
\usepackage[utf8]{inputenc}

\title{PhD Thesis "Flatness-based Constrained Control and Model-Free Control Applications to Quadrotors and Cloud Computing"\\ \vspace{5mm}
Chapter 2: Constraints on Nonlinear Finite Dimensional Flat Systems}

\author{Author: Maria Bekcheva }
\date{July 2019} 

\usepackage{amsmath}
\usepackage{mathtools}
\usepackage{amssymb}
\usepackage{graphicx, caption, subcaption}
\usepackage{mathrsfs}
\usepackage{hyperref}

\usepackage{appendix}
\usepackage{chngcntr}
\usepackage{etoolbox}
\usepackage{lipsum}
\usepackage{hyperref}

\newtheorem{theorem}{Theorem}
\newtheorem{assumption}{Assumption}

\newtheorem{lemma}{Lemma}
\newtheorem{proposition}{Proposition}
\newtheorem{proof}{Proof}
\newtheorem{question}{Question}
\newtheorem{problem}{Problem}
\newtheorem{definition}{Definition}
\newtheorem{example}{Example}
\newtheorem{remark}{Remark}

\newcommand{\Vect}[1]{{ \mbox{\boldmath ${#1}$} }}
\newcommand{\ie}{{\emph{i.e.}} }
\newcommand{\eg}{{\emph{e.g.}} }

\newcommand{\Reals}{{\mathbb R}}

\newcommand{\Rationals}{{\mathbb Q}}
\newcommand{\Naturals}{{\mathbb N}}

\DeclareRobustCommand \norm[1]{\left\lVert#1\right\rVert}

\makeatletter
\def\munderbar#1{\underline{\sbox\tw@{$#1$}\dp\tw@\z@\box\tw@}}
\makeatother
\newcommand{\rn}{{\mathbb R}}
\newcommand{\nn}{{\mathbb N}}
\global\long\def\dv#1#2{{ \mbox{\boldmath ${#1}$} }^{\langle#2\rangle}}
\global\long\def\dda#1#2{#1^{\langle#2\rangle}}

	\renewcommand{\th}{\textsuperscript{th}} 

\begin{document}
	
	\maketitle
	{\centering
		PhD directors:  Hugues Mounier and Luca Greco \\
 \itshape Universit\'{e} Paris-Saclay \\}
\vspace{15mm}
This is the second chapter in my thesis "Flatness-based Constrained Control and
Model-Free Control Applications to Quadrotors and Cloud Computing". Comments and suggestions are most welcome \footnote{ \url{maria.bekcheva@l2s.centralesupelec.fr} or \url{maria.bekcheva@gmail.com.}}.
\newpage
	\tableofcontents

\newpage

 \textit{Abstract: This chapter presents an approach to embed the input/state/output constraints in a unified manner into the trajectory design for differentially flat systems. To that purpose, we specialize the flat outputs (or the reference trajectories) as Bézier curves. Using the flatness property, the system's inputs/states can be expressed as a combination of Bézier curved flat outputs and their derivatives. Consequently, we explicitly obtain the expressions of the control points of the inputs/states Bézier curves as a combination of the control points of the flat outputs. By applying desired constraints to the latter control points, we find the feasible regions for the output Bézier control points \ie a set of feasible reference trajectories.
}
\section{Chapter overview}
 
 \subsection{Motivation}
The control of nonlinear systems subject to \textit{state and input constraints} is one of the major challenges in control theory.
Traditionally, in the control theory literature, the reference trajectory to be tracked is specified in advance.  Moreover for some applications, for instance, the quadrotor trajectory tracking, selecting the \textit{right trajectory} in order to avoid obstacles while not damaging the actuators is of crucial importance. 

In the last few decades,  Model Predictive Control (MPC) \cite{Carlos1989, Mayne2000} has achieved a big success in dealing with constrained control systems.  Model predictive control is a form of control in which the current control law is obtained by solving, at each sampling instant, a finite horizon open-loop optimal control problem, using the current state of the system as the initial state; the optimization yields an optimal control sequence and the first control in this sequence is applied to the system. It has been widely applied in petro-chemical and related industries where satisfaction of constraints is particularly important because efficiency demands operating points on or close to the boundary of the set of admissible states and controls.

The optimal control or MPC maximize or minimize a defined performance criterion chosen by the user.  The optimal control techniques, even in the case without constraints are usually discontinuous, which makes them less robust and more dependent of the initial conditions. In practice, this means that the delay formulation renders the numerical computation of the optimal solutions difficult.\\

A large part of the literature working on constrained control problems is focused on optimal trajectory generation \cite{Faulwasser2011, VanLoock2015}. These studies are trying to find feasible trajectories that optimize the performance following a specified criterion.  Defining the right criterion to optimize may be a difficult problem in practice. Usually, in such cases, the feasible and the optimal trajectory are not too much different. For example, in the case of autonomous vehicles \cite{LaValle2006}, due to the dynamics, limited curvature, and under-actuation, a vehicle often has few options for how it changes lines on highways or how it travels over the space immediately in front of it. Regarding the complexity of the problem, searching for a feasible trajectory is \textit{easier}, especially in the case where we need \textit{real-time re-planning} \cite{Hagenmeyer2008, Hagenmeyer2010}. Considering that the evolution of transistor technologies is reaching its limits, low-complexity controllers that can take the constraints into account are of considerable interest. The same remark is valid when the system has sensors with limited performance. 

\subsection{Research objective and contribution}

In this chapter, we propose a novel trajectory-based framework to deal with system constraints. We are answering the following question:

\begin{question}
	How to design a set of the reference trajectories (or the feed-forwarding trajectories) of a nonlinear system such that the input, state and/or output constraints are fulfilled?
\end{question} 

\begin{figure}
	\centering \hspace*{-10ex}
	\includegraphics[width=6.5in]{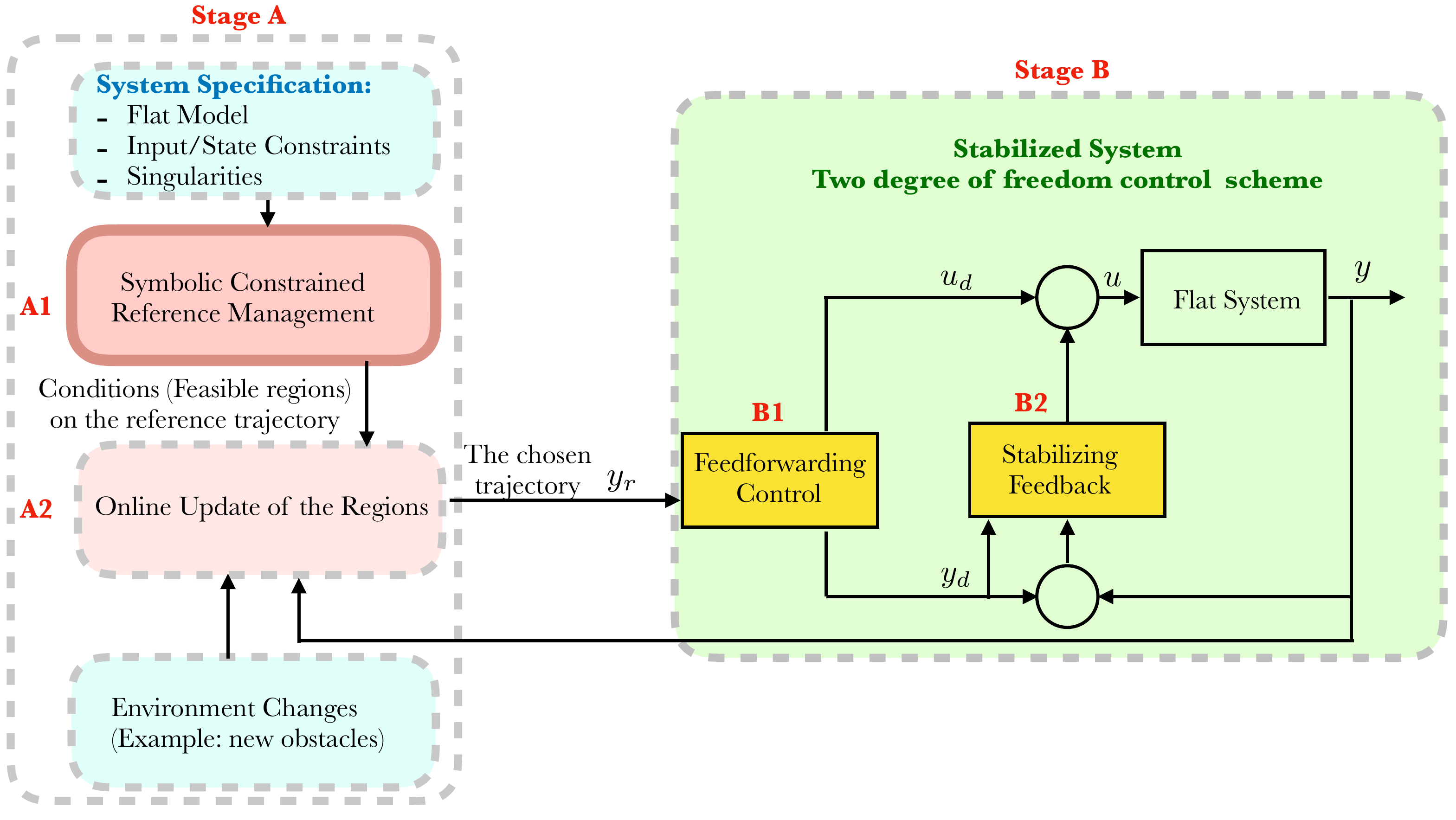}
	\caption{Two degrees of freedom control scheme overview}
	\label{fig_Scheme}
\end{figure}

For that purpose, we divide the control problem in two stages (see Figure \ref{fig_Scheme}). Our objective will be to elaborate a \textit{constrained reference trajectory management} (Stage A) which is meant to be applied to already pre-stabilized systems (Stage B).

Unlike other receding horizon approaches which attempt to solve stabilization, tracking, and constraint fulfilment at the same time, we assume that in \textit{Stage B}, a primal controller has already been designed to stabilize the system which provide nice tracking properties in the absence of constraints. In stage B, we employ the two-degree of freedom design consisting of a constrained trajectory design (constrained feedfowarding) and a  feedback control. 

In \textit{Stage A}, the constraints are embedded in the flat output trajectory design. Thus, \textit{our constrained trajectory generator} defines a feasible open-loop reference trajectory satisfying the states and/or control constraints that \textit{a primal feedback controller} will track and stabilize around. 

To construct Stage A we first take advantage of the \textit{differential flatness property} which serves as a base to construct our method. The differential flatness property yields exact expressions for the state and input trajectories of the system through trajectories of a flat output and its derivatives without integrating any differential equation. The latter property allows us to map the state/input constraints into the flat output trajectory space.

 Then, in our symbolic approach (stage A1), we assign a Bézier curve to each flat output where the parameter to be chosen are the so-called \textit{control points} (yielding a finite number of variables on a finite time horizon) given in a symbolic form. This kind of representation naturally offers several algebraic operations like the sum, the difference and multiplication, and affords us to preserve the explicit functions structure without employing discrete numerical methods. The advantage to deal with the constraints \textit{symbolically}, rather than numerically, lies in that the symbolic solution explicitly depends on the control points of the reference trajectory. This allows to study how the input or state trajectories are influenced by the reference trajectory.

We find symbolic conditions on the trajectory control points such that the states/inputs constraints are fulfilled.

 We translate the state/input constraints into constraints on the reference trajectory control points and we wish to reduce the solution of the systems of equations/inequations into a simpler one. Ideally, we want to find the exact set of solutions \ie the constrained subspace.

We explain how this symbolic constrained subspace representation can be used for \textit{constrained feedforwarding trajectory selection}. The stage A2 can be done in two different ways. 
\begin{itemize}
\item	When a system should track a trajectory in a \textit{static known environment}, then the exact set of feasible trajectories is found and the trajectory is fixed by our choice. If the system's environment changes, we only need to re-evaluate the exact symbolic solution with new numerical values.
 \item When a system should track a trajectory in an \textit{unknown environment with moving objects}, then, whenever necessary, the reference design modifies the reference supplied to a primal control system so as to enforce the fulfilment of the constraints. This second problem is not addressed in the thesis.
\end{itemize}

Our approach is \textit{not based on any kind of optimization} nor does it need computations for a given numerical value at each sampling step. We determine a set of feasible trajectories through the system constrained environment that enable a controller to make quick real-time decisions. For systems with singularities, we can isolate the singularities of the system by considering them as additional constraints.

\subsection{Existing Methods}

\begin{itemize}
		
	\item  Considering actuator constraints based on the derivatives of the flat output (for instance, the jerk \cite{Gasparetto2008,Yu2014}, snap \cite{Mellinger2011}) can be too conservative for some systems. The fact that a feasible reference trajectory is designed following the system model structure allows to choose a quite aggressive reference trajectory.
	\item In contrast to \cite{Lowis2014}, we characterize the whose set of viable reference trajectories which take the constraints into account.
	
	\item  In \cite{Suryawan2012}, the problem of constrained trajectory planning of differentially flat systems is cast into a simple quadratic programming problem ensuing computational advantages by using the flatness property and the B-splines curve's properties. They simplify the computation complexity by taking advantage of the B-spline minimal (resp. maximal) control point. The simplicity comes at the price of having only minimal (resp. maximal) constant constraints that eliminate the possible feasible trajectories and renders this approach conservative.

	\item In \cite{Graichen2008}, an inversion-based design is presented, in which the transition task between two stationary set-points is solved as a two-point boundary value problem. In this approach, the trajectory is defined as polynomial where only the initial and final states can be fixed.
	
	\item  The thesis of Bak  \cite{Bak2000} compared existing methods to constrained controller design (anti-windup, predictive control, nonlinear methods), and introduced a nonlinear gain scheduling approach to handle actuator constraints.

\end{itemize}

\subsection{Outline}
This chapter is organized as follows:
\begin{itemize}
	\item In section \ref{secPreliminaries}, we recall the notions of differential flatness for finite dimensional systems.
	\item In section \ref{sec2ProblemStatement}, we present our problem statement for the constraints fulfilment through the reference trajectory.
	\item In section \ref{SecTraj}, we detail the flat output parameterization given by the Bézier curve, and its properties.
	\item In section \ref{SecProcedure}, we give the whole procedure in establishing reference trajectories for constrained open-loop control. We illustrate the procedure through two applications in section \ref{sec5App}. 
	\item In section \ref{sec4Feasibility}, we present the two methods that we have used to compute the constrained set of feasible trajectories.

\end{itemize}

\section{Differential flatness  overview}\label{secPreliminaries}

The concept of \textit{differential flatness} was introduced in \cite{Fliess1995,Fliess1999} for non-linear finite dimensional systems. By the means of differential flatness, a non-linear system can be seen as a controllable linear system through a dynamical feedback.

A model shall be described by a differential system as:

\begin{align}
\label{eqFlatnessFinite}
\Vect{\dot x} = f(\Vect{x}, \Vect{u})        
\end{align}
where  $ \Vect{x} \in \Reals^n$ denote the \textit{state variables} and
$\Vect{u} \in \Reals^m$ the \textit{input vector}.
Such a system is said to be  \textit{flat} if there exists a set of \textit{flat outputs (or linearizing outputs)} (equal in number to the number of inputs) given by
\begin{equation}
\Vect{y} = h(\Vect{x}, \Vect{u},\Vect{\dot u},..., \Vect{u}^{(r)})
\end{equation}
with $r \in \mathbb{N}$ such that the components of  $\Vect{y} \in \Reals^m$  and all their derivatives are
functionally independent and such that we can parametrize every solution $(\Vect{x}, \Vect{u})$ of \eqref{eqFlatnessFinite} in some dense open set by means of the flat output $\Vect{y}$ and its derivatives up to a finite order $q$:
\begin{subequations}\label{eq_FlatParam}
	\begin{align} 
	&\Vect{x} = \psi(\Vect{y},\Vect{\dot y},..., \Vect{y}^{(q-1)}),\\ 
	&\Vect{u} = \zeta(\Vect{y},\Vect{\dot y},..., \Vect{y}^{(q)})
	\end{align}
\end{subequations}
where  $(\psi, \zeta)$ are \textit{smooth functions} that give the trajectories of $\Vect{x}$ and $\Vect{u}$ as functions of the flat outputs and their time derivatives. The preceding expressions in \eqref{eq_FlatParam}, will be used to obtain the so called \textit{open-loop controls}.
The differential flatness found numerous applications, non-holonomic systems, among others (see \cite{sira2004} and the references therein). 

In the context of feedforwarding trajectories, the “degree of continuity” or the smoothness of the reference trajectory (or curve) is one of the most important factors. The smoothness of a trajectory is measured by the number of its continuous derivatives. We give the definitions on the trajectory continuity when it is represented by a parametric curve in the Appendix \ref{app:1-TrajConti}.

\section{Problem statement: Trajectory constraints fulfilment} \label{sec2ProblemStatement}

\subsection*{Notation}
Given the scalar
function $z\in C^{\kappa}(\rn,\rn)$ and the number $\alpha\in\nn$,
we denote by $\dv z{\alpha}$ the tuple of derivatives of $z$ up
to the order $\alpha\leqslant \kappa$: $\dv z{\alpha}=z,\,\dot{z},\ddot{z},\ldots,\,z^{(\alpha)}$.
Given the vector function $\Vect{v}=(v_{1},\ldots,v_{q})$, $v_{i}\in C^{\kappa}(\rn,\rn)$
and the tuple $\Vect{\alpha}=(\alpha_{1},\ldots,\alpha_{q})$, $\alpha_{i}\in\nn$,
we denote by $\dv v{\alpha}$ the tuple of derivatives of each component
$v_{i}$ of $\Vect{v}$ up to its respective order $\alpha_{i}\leqslant \kappa$:
$\dv v{\alpha}=v_{1},\ldots,v_{1}^{(\alpha_{1})},\,v_{2},\ldots,v_{2}^{(\alpha_{2})},\,\ldots\,,\,v_{q},\ldots,v_{q}^{(\alpha_{q})}$.

\subsection{General problem formulation}

Consider the nonlinear system
\begin{equation}
\Vect{\dot{x}}(t)=f(\Vect{x}(t),\Vect{u}(t))\label{eq:sys_nonlin}
\end{equation}
with state vector $\Vect{x}=(x_{1},\ldots,x_{n})$ and control input $\Vect{u}=(u_{1},\ldots,u_{m})$,
$x_{i},u_{j}\in C^{\kappa}([0,+\infty),\rn)$ for a suitable $\kappa\in\nn$.
We assume the state, the input and their derivatives to be subject
to both inequality and equality constraints of the form

\begin{subequations}
	\label{eq:constr}
	\begin{align}
	C_{i}(\dv x{\alpha_{i}^{x}} (t),\dv u{\alpha_{i}^{u}} (t))\leqslant 0 & \qquad\forall t\in[0,T],\ \forall i\in\{1,\ldots,\nu^{\mathrm{in}}\}\label{eq:constr_a}\\
	D_{j}(\dv x{\beta_{j}^{x}}(t),\dv u{\beta_{j}^{u}}(t))=0 & \qquad\forall t \in I_{j},\ \forall j\in\{1,\ldots,\nu^{\mathrm{eq}}\}\label{eq:constr_b}
	\end{align}
\end{subequations}
with each $I_{j}$ being either $[0,T]$ (continuous equality constraint)
or a discrete set $\{t_{1},\ldots,t_{\gamma}\}$, $0\leq t_{1}\leqslant \cdots\leqslant t_{\gamma}\leqslant T<+\infty$
(discrete equality constraint), and $\alpha_{i}^{x},\beta_{j}^{x}\in\nn^{n}$,
$\alpha_{i}^{u},\beta_{j}^{u}\in\nn^{m}$. We stress that the relations
(\ref{eq:constr}) specify objectives (and constraints) on the finite
interval $[0,T]$. Objectives can be also formulated as a concatenation
of sub-objectives on a union of sub-intervals, provided that some
continuity and/or regularity constraints are imposed on the boundaries
of each sub-interval. Here we focus on just one of such intervals. 

Our aim is to characterise the set of input and state trajectories
$(\Vect{x},\Vect{u})$ satisfying the  system’s equations (\ref{eq:sys_nonlin}) and the constraints
(\ref{eq:constr}). More formally we state the following problem.
\begin{problem}[Constrained trajectory set]
	\label{prob:ConstrProb} Let $\mathscr{C}$ be a subspace of $C^{\kappa}([0,+\infty),\rn)$.
	Constructively characterise the set $\mathscr{C}^{\mathrm{cons}}\subseteq\mathscr{C}^{n+m}$
	of all extended trajectories $(\Vect{x},\Vect{u})$ satisfying the system (\ref{eq:sys_nonlin})
	and the constraints (\ref{eq:constr}).
\end{problem}
	Problem \ref{prob:ConstrProb} can be considered as a generalisation
	of a constrained reachability problem (see for instance \cite{Faulwasser2014}). In such a reachability problem the stress is usually made on initial and final set-points and the goal is to find a suitable
	input to steer the state from the initial to the final point while possibly fulfilling the constraints. Here, we wish to give a functional
	characterisation of the overall set of extended trajectories $(\Vect{x},\Vect{u})$ 
	satisfying some given differential constraints. A classical constrained
	reachability problem can be cast in the present formalism by limiting
	the constraints $C_{i}$ and $D_{j}$ to $\Vect{x}$ and $\Vect{u}$ (and not their
	derivatives) and by forcing two of the equality constraints to coincide
	with the initial and final set-points.
	
	Problem \ref{prob:ConstrProb} is difficult to be addressed in its
	general setting. To simplify the problem, in the following we make
	some restrictions to the class of systems and to the functional space
	$\mathscr{C}$. As a first assumption we limit the analysis to differentially
	flat systems \cite{Fliess1995}.

\subsection{Constraints in the flat output space}

Let us assume that system (\ref{eq:sys_nonlin}) is differentially
flat with flat output\footnote{We recall that the flat output $\Vect{y}$ has the same dimension $m$ as the input vector $\Vect{u}$.}
\begin{equation}
\Vect{y}=(y_{1},\ldots,y_{m})=h(\Vect {x},\dv {u}{\rho^{u}})\,,\label{eq:flat_out}
\end{equation}
with $\rho^{u}\in\nn^{m}$. Following Equation \eqref{eq_FlatParam}, the parameterisation or the feedforwarding trajectories 
associated to the reference trajectory $\Vect{y_r}$ is

\begin{subequations}
	\label{eq:flat_param}
	\begin{align}
	\Vect{x}_r& =\psi(\dv {y_r}{\eta^{x}})\label{eq:flat_param_a}\\
	\Vect{u}_r & =\zeta(\dv {y_r}{\eta^{u}})\,,\label{eq:flat_param_b}
	\end{align}
\end{subequations}
with $\eta^{x}\in\nn^{n}$ and $\eta^{u}\in\nn^{m}$.

Through the first step of the dynamical extension algorithm \cite{Fliess1990}, we get the flat output dynamics
\begin{equation}
\left\{ \begin{aligned}y_{1}^{(k_{1})} & =\phi_{1}(\dv y{\mu_{1}^{y}},\dv u{\mu_{1}^{u}})\\
\vdots\\
y_{m}^{(k_{m})} & =\phi_{m}(\dv y{\mu_{m}^{y}},\dv u{\mu_{m}^{u}})\,,
\end{aligned}
\right.\label{eq:flat_out_dyn}
\end{equation}
with $\mu_{i}^{y}=(\mu_{i1}^{y},\ldots,\mu_{im}^{y})\in\nn^{m}$,
$\mu_{i}^{u}=(\mu_{i1}^{u},\ldots,\mu_{im}^{u})\in\nn^{m}$ and $k_{i}>\max_{j}\mu_{ji}^{y}$.
The original $n$-dimensional dynamics (\ref{eq:sys_nonlin}) and
the $K$-dimensional flat output dynamics (\ref{eq:flat_out_dyn})
($K=\sum_{i}k_{i}$) are in one-to-one correspondence through (\ref{eq:flat_out})
and (\ref{eq:flat_param}). Therefore, the constraints (\ref{eq:constr})
can be re-written as

\begin{subequations}
	\label{eq:constr_flat}
	\begin{align}
	\varGamma_{i}(\dv {y_r}{\omega_{i}^{\mathrm{in}}})\leqslant 0 & \qquad\forall t \in[0,T],\ \forall i\in\{1,\ldots,\nu^{\mathrm{in}}\}\label{eq:constr_flat_a}\\
	\varDelta_{j}(\dv {y_r}{\omega_{j}^{\mathrm{eq}}})=0 & \qquad\forall t \in I_{j},\ \forall j\in\{1,\ldots,\nu^{\mathrm{eq}}\}\label{eq:constr_flat_b}
	\end{align}
\end{subequations}
with $$\varGamma_{i}(\dv {y_r}{\omega_{i}^{\mathrm{in}}})=C_{i}(\dda{(\psi(\dv {y_r}{\eta_{^{x}}}))}{\alpha_{i}^{x}},\dda{\zeta(\dv {y_r}{\eta^{u}})}{\alpha_{i}^{u}}),$$
$$\varDelta_{j}(\dv {y_r}{\omega_{j}^{\mathrm{eq}}})=D_{j}(\dda{(\psi(\dv {y_r}{\eta^{x}})}{\beta_{j}^{x}},\dda{\zeta(\dv {y_r}{\eta^{u}})}{\beta_{j}^{u}})$$\\
and $\omega_{i}^{\mathrm{in}},\omega_{j}^{\mathrm{eq}}\in\nn^{m}$.

\begin{remark}
	We may use the same result to embed an \textit{input rate constraint} $\Vect{\dot u_r}$.
\end{remark}

Thus, Problem \ref{prob:ConstrProb} can be transformed in terms of
the flat output dynamics (\ref{eq:flat_out_dyn}) and the constraints
(\ref{eq:constr_flat}) as follows.
\begin{problem}[Constrained flat output set]\label{prob:OutConstrProb}\footnote{Here the max operator is applied elementwise on each vector.}
	Let $\mathscr{C}_{y}$ be a subspace of $C^{p}([0,+\infty),\rn)$
	with $p=\max((k_{1},\ldots,k_{m}),\omega_{1}^{\mathrm{in}},\ldots,\omega_{\nu^{\mathrm{in}}}^{\mathrm{in}},\omega_{1}^{\mathrm{eq}},\ldots,\omega_{\nu^{\mathrm{eq}}}^{\mathrm{eq}})$.
	Constructively characterise the set $\mathscr{C}_{y}^{\mathrm{cons}}\subseteq\mathscr{C}_{y}^{m}$
	of all flat outputs satisfying the dynamics (\ref{eq:flat_out_dyn})
	and the constraints (\ref{eq:constr_flat}).
\end{problem}
	Working with differentially flat systems allows us to translate, in
	a unified fashion, all the state and input constraints as constraints
	in the flat outputs and their derivatives (See (\ref{eq:constr_flat})).
	We remark that $\psi$ and $\zeta$ in (\ref{eq:flat_param}) are
	such that $\psi(\dv y{\eta_{x}})$ and $\zeta(\dv y{\eta_{u}})$ satisfy
	the dynamics of system (\ref{eq:sys_nonlin}) by construction. In
	other words, the extended trajectories $(\Vect{x},\Vect{u})$ of (\ref{eq:sys_nonlin})
	are in one-to-one correspondence with $\Vect{y}\in\mathscr{C}_{y}^{m}$ given
	by (\ref{eq:flat_out}). Hence, choosing $\Vect{y}$ solution of Problem
	\ref{prob:OutConstrProb} ensures that $\Vect{x}$ and $\Vect{u}$ given by (\ref{eq:flat_param})
	are solutions of Problem \ref{prob:ConstrProb}.

\subsection{Problem specialisation}

For any practical purpose, one has to choose the functional space
$\mathscr{C}_{y}$ to which all components of the flat output belong.
Instead of making reference to the space $\mathscr{C}^{\mathrm{gen}}\coloneqq C^{p}([0,+\infty),\rn)$,
mentioned in the statement of Problem \ref{prob:ConstrProb}, we focus
on the space $\mathscr{C}_{T}^{\mathrm{gen}}\coloneqq C^{p}([0,T],\rn)$.
Indeed, the constraints (\ref{eq:constr_flat}) specify finite-time
objectives (and constraints) on the interval $[0,T]$. Still, the problem
exhibits an infinite dimensional complexity, whose reduction leads to choose an approximation space $\mathscr{C}^{\mathrm{app}}$
that is dense in $\mathscr{C}_{T}^{\mathrm{gen}}$. A possible choice
is to work with parametric functions expressed in terms of basis functions
like, for instance, Bernstein-Bézier, Chebychev or Spline polynomials. 

A scalar Bézier curve of degree $N\in\nn$ in the Euclidean space
$\rn$ is defined as
\[
P(s)=\sum_{j=0}^{N}\alpha_{j}B_{jN}(s),\qquad s\in[0,1]
\]
where the $\alpha_{j}\in\rn$ are the control points and $B_{jN}(s)=\binom{N}{j}(1-s)^{N-j}s^{j}$
are Bernstein polynomials \cite{DeBoor}. For sake of simplicity,
we set here $T=1$ and we choose as functional space

\begin{equation}\label{eqB7}
\mathscr{C}^{\mathrm{app}} = \left\lbrace \sum_{0}^{N} \alpha_j B_jN  |  N \in \Naturals, (\alpha_j)_{0}^{N} \in \Reals^{N+1}, B_j \in \mathcal{C}^{0} ([0, 1], \Reals)\right\rbrace 
\end{equation} 

The set of Bézier functions of generic degree has the very useful
property of being closed with respect to addition, multiplication,
degree elevation, derivation and integration operations (see section \ref{SecTraj}).
As a consequence, any polynomial integro-differential operator applied
to a Bézier curve, still produces a Bézier curve (in general of different
degree). Therefore, if the flat outputs $\Vect{y}$ are chosen in $\mathscr{C}^{\mathrm{app}}$
and the operators $\varGamma_{i}(\cdot)$ and $\varDelta_{j}(\cdot)$
in (\ref{eq:constr_flat}) are integro-differential polynomials, then
such constraints can still be expressed in terms of Bézier curves
in $\mathscr{C}^{\mathrm{app}}$. We stress that, if some constraints
do not admit such a description, we can still approximate them up
to a prefixed precision $\varepsilon$ as function in $\mathscr{C}^{\mathrm{app}}$
by virtue of the denseness of $\mathscr{C}^{\mathrm{app}}$ in $\mathscr{C}_{1}^{\mathrm{gen}}$.
Hence we assume the following.
\begin{assumption}
Considering each flat output $y_r \in \mathscr{C}^{\mathrm{app}}$  defined as
$$ y_r =  \sum_{j=0}^{N} \alpha_j B_{jN}(s),$$
the constraints (\ref{eq:constr_flat}) can be written as
\begin{align} 
&\Gamma_i(\Vect{y_r}^{\left\langle \omega_{i}^{\mathrm{in}}\right\rangle}) = \sum_{k=0}^{N_i^{\mathrm{in}}} \lambda_{ik} B_{kN} (s), \label{Eq_ConstraintsA1} \\
&\Delta_j(\Vect{y_r}^{\left\langle \omega_{j}^{\mathrm{eq}}\right\rangle}) = \sum_{k=0}^{N_i^{\mathrm{eq}}} \delta_{jk} B_{kN} (s) \label{Eq_ConstraintsA2}
\end{align}

where
\begin{align*}
\lambda_{ik} = r_{ik}^{\mathrm{in}}(\alpha_0, \ldots, \alpha_N)\\
\delta_{jk} = r_{jk}^{\mathrm{eq}}(\alpha_0, \ldots, \alpha_N)\\
r_{ik}^{\mathrm{in}}, r_{jk}^{\mathrm{eq}} \in \Reals[\alpha_0, \ldots, \alpha_N]
\end{align*}
\ie the $\lambda_{ik}$ and $\delta_{jk}$ are polynomials in the $\alpha_0, \ldots, \alpha_N$.$\blacksquare$

\end{assumption}

Set the following expressions as$ \nu^{\mathrm{in}}$
\begin{align*}
&r^{\mathrm{in}} = (r_{1,0}^{\mathrm{in}}, \ldots, r^{\mathrm{in}}_{{ \nu^{\mathrm{in}}},{N}_{ \nu^{\mathrm{in}}}^{in}}), \\
&r^{\mathrm{eq}} = (r_{1,0}^{\mathrm{eq}}, \ldots, r^{\mathrm{eq}}_{{ \nu^{\mathrm{eq}}},{N}_{ \nu^{\mathrm{eq}}}^{eq}}), \\
&r = (r^{\mathrm{in}}, r^{\mathrm{eq}}),
\end{align*}
 the control point vector  $\Vect{\alpha} = (\alpha_1, \ldots, \alpha_N)$, and the basis function vector $\Vect{B} = (B_{1N}, \ldots,  B_{NN})$.
Therefore, we obtain a  semi-algebraic set defined as:
$$\mathscr{I}(r, \mathbb{A})= \left\lbrace \Vect{\alpha} \in \mathbb{A} \:| \: r^{\mathrm{in}} (\Vect{\alpha}) \leqslant 0, r^{\mathrm{eq}}(\Vect{\alpha}) = 0 \right \rbrace$$
for any parallelotope 
\begin{equation}\label{parallelo}
\mathbb{A} = [\munderbar{\alpha}_0, \bar{\alpha}_0] \times \cdots \times [\munderbar{\alpha}_N, \bar{\alpha}_N] , \munderbar{\alpha}_i, \bar{\alpha}_i \in \Reals \cup\lbrace -\infty, \infty \rbrace, \munderbar{\alpha}_i < \bar{\alpha}_i
\end{equation}

Thus $\mathscr{I}(r, \mathbb{A})$ is a semi-algebraic set associated to the constraints \eqref{eq:constr_flat}. The parallelotope $\mathbb{A}$ represents the trajectory sheaf of available trajectories, among which the user is allowed to choose a reference. The semi-algebraic set $\mathscr{I}(r, \mathbb{A})$ represents how the set $\mathbb{A}$ is transformed in such a way that the trajectories fulfill the constraints \eqref{eq:constr_flat}.
Then, picking an $\Vect{\alpha}$ in $\mathscr{I}(r, \mathbb{A})$ ensures that $y_r= \Vect{\alpha} \Vect{B}$ automatically satisfies the constraints \eqref{eq:constr_flat}.\\
The Problem \ref{prob:OutConstrProb} is then reformulated as :

\begin{problem}
For any fixed parallelotope $\mathbb{A}$, constructively characterise the semi-algebraic set $\mathscr{I}(r, \mathbb{A})$.
\end{problem}
This may be done through exact, symbolic techniques (such as, \eg the Cylidrical Algebraic Decomposition) or through approximation techniques yielding outer approximations $\mathscr{I}_l^{out}(r, \mathbb{A}) \supseteq \mathscr{I}(r, \mathbb{A}) $ and inner approximations $\mathscr{I}_l^{inn}(r, \mathbb{A}) \subseteq \mathscr{I}(r, \mathbb{A}) $ with $\lim\limits_{l \rightarrow \infty} \mathscr{I}_l^{out} = \lim\limits_{l \rightarrow \infty} \mathscr{I}_l^{inn} = \mathscr{I}$. $\blacksquare$ \\
This characterisation shall be useful to extract inner approximations of a special type yielding trajectory sheaves included in $\mathscr{I}(r, \mathbb{A})$. A specific example of this type of approximations will consist in disjoint unions of parallelotopes:

\begin{equation}
	\mathscr{I}_l^{inn} (r, \mathbb{A})= \bigcup \limits_{j \in I_l} \mathbb{B}_{l,j}, \quad \forall i, j \in I_l, \mathbb{B}_{l,i} \cap \mathbb{B}_{l,j} = \emptyset
\end{equation}

This class of inner approximation is of practical importance for end users, as the applications in Section \ref{sec5App} illustrate.

\subsection{Closed-loop trajectory tracking }
So far this chapter has focused on the design of open-loop trajectories while assuming that the system model is perfectly known and that the initial conditions are exactly known.
When the reference open-loop trajectories $(\Vect{x_r},\Vect{u_r})$ are well-designed \ie respecting the constraints and avoiding the singularities, as discussed above, the system is close to the reference trajectory. However, to cope with the environmental disturbances and/or small model uncertainties, the tracking of the constrained open-loop trajectories should be made robust using \textit{feedback control}. The feedback control guarantees the stability and a certain robustness of the approach, and is called the second degree of freedom of the primal controller (Stage B2 in figure \ref{fig_Scheme}). \\

We recall that some flat systems can be transformed via endogenous feedback and coordinate change to a linear dynamics \cite{Fliess1995, sira2004}. To make this chapter self-contained, we briefly discuss the \textit{closed-loop trajectory tracking} as presented in \cite{Martin2006}.

Consider a differentially flat system with flat output $\Vect{y} = (y_1, \ldots, y_m)$ ($m$ being the number of independent inputs of the system). Let $\Vect{y}_r (t) \in C^{\eta}(\Reals)$ be a reference trajectory for $\Vect{y}$. 
Suppose the desired open-loop state/ input trajectories $(x_r(t), u_r(t))$ are generated offline. We need now a feedback control to track them. 

Since the nominal open-loop control (or the feedforward input) linearizes the system, we can take a simple linear feedback, yielding the following closed-loop error dynamics:
\begin{equation}\label{eq_LinearControl}
\Vect{e}^{(\eta)} + \lambda_{\eta-1} \Vect{e}^{(\eta-1)} + \cdots + \lambda_1 \Vect{\dot e} + \lambda_0 \Vect{e} = 0
\end{equation}
where $\Vect{e} = \Vect{y} - \Vect{y}_r$ is the tracking error and the coefficients $\Lambda= [\lambda_0, \ldots, \lambda_{\eta-1}]$ are chosen to ensure an asymptotically stable behaviour (see e.g. \cite{Fliess1999}).
\begin{remark}Note that this is not true for all flat systems, in \cite{Hagenmeyer2002} can be found an example of flat system with nonlinear error dynamics.
\end{remark}

 Now let $(\Vect{x} , \Vect{u}) $ be the closed-loop trajectories of the system. These variables can be expressed in terms of the flat output $\Vect{y}$
as:
\begin{align}
\label{eq:expressionZInFlatOutput}
\Vect{x} = \psi(\dv {y}{\eta -1}), \quad \Vect{u} = \zeta (\dv {y}{\eta})
\end{align}

Then, the associated reference open-loop trajectories $(\Vect{x_r}, \Vect{u_r}) $ are given by
\begin{align*}
\Vect{x_r} &= \psi(\dv {y_r}{\eta-1}),  \quad \Vect{u_r} = \zeta (\dv {y_r}{\eta})
\end{align*}

Therefore, 
\begin{align*}
\Vect{x}=  \psi(\dv {y}{\eta-1})=  \psi(\dv {y_r}{\eta-1} + \dv {e}{\eta-1})
\end{align*}
and
\begin{align*}
\Vect{u}=  \zeta(\dv {y}{\eta})=  \zeta(\dv {y_r}{\eta} + \dv {e}{\eta}, -\Lambda \dv {e}{\eta}).
\end{align*}

As further demonstrated in \cite{Martin2006}[See Section 3.3], since the tracking error $\Vect{e}\rightarrow 0$ as $t \rightarrow\infty$ that means $\Vect{x}\rightarrow\Vect{x_r}$ and $\Vect{u}\rightarrow\Vect{u_r}$.

Besides the linear controller (Equation \eqref{eq_LinearControl}), many different linear and nonlinear feedback controls can be used to ensure convergence to zero of the tracking error. For instance, sliding mode control, high-gain control, passivity based control, model-free control, among others.

\begin{remark}
An alternative method to the feedback linearization, is the \textit{exact feedforward linearization} presented in \cite{Hagenmeyer2003a} where the problem of type "division by zero" in the control design is easily avoided. This control method removes the need for asymptotic observers since in its design the system states information is replaced by their corresponding reference trajectories. The robustness of the exact feedforwarding linearization was analyzed in \cite{Hagenmeyer2010}.
\end{remark}


\section{Preliminaries on Symbolic Bézier trajectory } \label{SecTraj}
To create a trajectory that passes through several points, we can use approximating or interpolating approaches. The interpolating trajectory that passes through the points is prone to oscillatory effects (more unstable), while the approximating trajectory like the Bézier curve or B-Spline curve is more convenient since it only approaches defined so-called \textit{control points} \cite{DeBoor} and have simple geometric interpretations. The Bézier/B-spline curve can be handled by conveniently handling the curve's control points. \\
The main reason in choosing the Bézier curves over the B-Splines curves, is the simplicity of their arithmetic operators presented further in this Section. Despite the nice local properties of the B-spline curve, the direct symbolic multiplication\footnote{The multiplication operator is essential when we want to work with polynomial systems.} of B-splines lacks clarity and has partly known practical implementation \cite{Morken1991}.  \\

In the following Section, we start by presenting the \textit{Bézier curve} and its properties. Bézier curves are chosen to construct the reference trajectories because of their nice properties (smoothness, strong convex hull property, derivative property, arithmetic operations). They have their own type basis function, known as the Bernstein basis, which establishes a relationship with the so-called control polygon. A complete discussion about Bézier curves can be found in \cite{Prautzsch2002}. 
Here, some basic and key properties are recalled as a preliminary knowledge.

\subsection{Definition of the Bézier curve}
A Bézier curve is a parametric one that uses the Bernstein polynomials as a basis. An $n$th degree Bézier curve is defined by
\begin{equation}
f(t)=  \sum\limits_{j=0}^N c_j B_{j,N}(t), \quad 0 \leqslant t \leqslant 1  \\
\end{equation}
where the $c_j$ are \textit{the control points} and the basis functions $B_{j,N}(t)$ are the \textit{Bernstein polynomials} (see Figure \ref{fig_BernsteinBasis}). The $B_{j,N}(t)$ can be obtained explicitly by:
\begin{equation*}
B_{j,N}(t)= \binom{N}{j}(1-t)^{N-j} t^j \text{ for } j = 0, \ldots, N.
\end{equation*} 

or by recursion with the De Casteljau formula:
\begin{equation*}
B_{j,N}(t) = (1-t) B_{j,N-1}(t) + t B_{j-1,N-1}(t).
\end{equation*}
\begin{figure}
	\centering
	\includegraphics[width=4in]{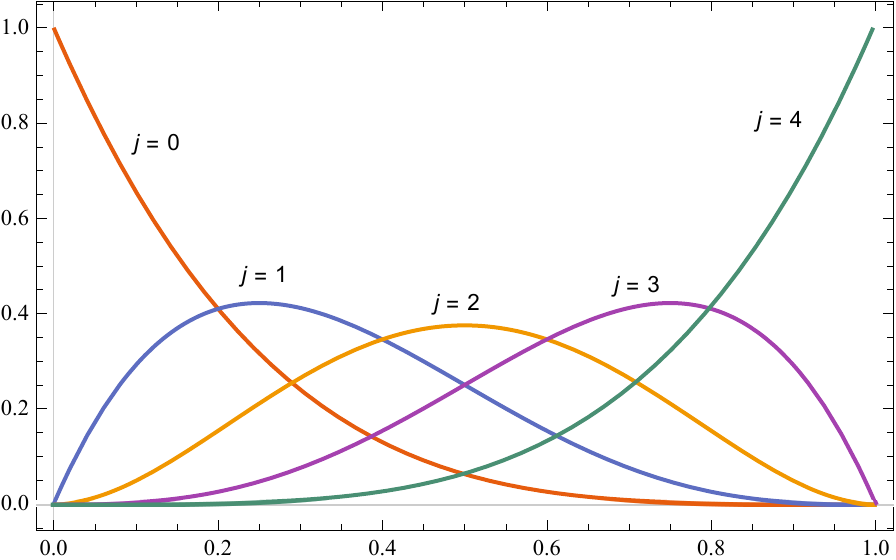}
	\caption{Bernstein Basis for degree $N=4$.}
	\label{fig_BernsteinBasis}
\end{figure}

\begin{figure}
	\centering
	\includegraphics[width=3in]{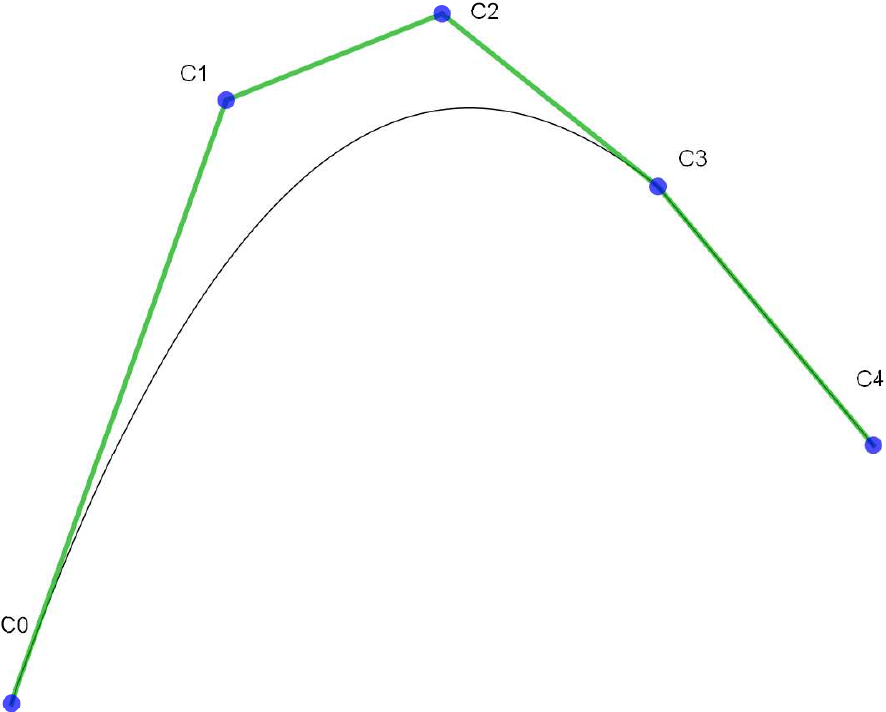}
	\caption{The convex hull property for Bézier curve  ($N=4$) with control points $c_j$($j = 0, \ldots, 4$).}
	\label{fig_BezierCurve}
\end{figure}

\subsection{Bézier properties}\label{subSecBezProp}
For the sake of completeness, we here list some important Bézier-Bernstein properties. 
\begin{lemma}
	Let $n$ be a non-negative polynomial degree. The Bernstein functions have the following properties:
	\begin{enumerate}
		\item \textit{Partition of unity.} $\sum\limits_{j=0}^n B_{j,N}(t) \equiv 1$\\
		This property ensures that the relationship
		between the curve and its defining Bézier points is invariant under affine transformations.
		
		\item \textit{Positivity.} If $t \in [0,1]$ then $B_{j,N}(t)>0$.\\
		It guarantees that the curve segment lies completely within the convex hull of the control points (see Figure \ref{fig_BezierCurve}). 
		
		\item \textit{Tangent property.} For the start and end point, this guarantees $f(0) = c_0$ and $f(1) = c_N$ but the curve never passes through the intermediate control points.
		
		\item \textit{Smoothness.} $B_{j,N}(t)$ is $N-1$  times continuously differentiable. Hence,
		increasing degree increases regularity.
	\end{enumerate}
\end{lemma}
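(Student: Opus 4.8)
The lemma to prove collects four standard properties of Bernstein polynomials: partition of unity, positivity on $(0,1)$, endpoint/tangent behaviour, and $C^{N-1}$ smoothness. Each follows directly from the closed form $B_{j,N}(t)=\binom{N}{j}(1-t)^{N-j}t^{j}$, so the plan is to dispatch them one by one with elementary arguments.

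\medskip

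\noindent\textbf{Proof plan.}

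First, for the \emph{partition of unity}, the plan is to recognise $\sum_{j=0}^{N}B_{j,N}(t)=\sum_{j=0}^{N}\binom{N}{j}t^{j}(1-t)^{N-j}$ as the binomial expansion of $\bigl(t+(1-t)\bigr)^{N}=1^{N}=1$, valid for every $t$. Alternatively, one can run an induction on $N$ using the De Casteljau recursion $B_{j,N}=(1-t)B_{j,N-1}+tB_{j-1,N-1}$ (with the convention $B_{-1,N-1}=B_{N,N-1}=0$): the sum at level $N$ splits into $(1-t)$ times the sum at level $N-1$ plus $t$ times the same sum, giving $(1-t)\cdot 1+t\cdot 1=1$. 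I would present the binomial-theorem version as the main line since it is immediate. The affine-invariance comment is then a one-line consequence: if $\sum_j B_{j,N}\equiv 1$ then applying an affine map $x\mapsto Ax+b$ to each control point $c_j$ and reconstructing the curve gives $\sum_j(Ac_j+b)B_{j,N}(t)=A\bigl(\sum_j c_jB_{j,N}(t)\bigr)+b\sum_j B_{j,N}(t)=A f(t)+b$.

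Second, \emph{positivity}: for $t\in(0,1)$ both factors $t^{j}$ and $(1-t)^{N-j}$ are strictly positive and the binomial coefficient $\binom{N}{j}$ is a positive integer, so the product is $>0$; I would also note the boundary behaviour ($B_{j,N}(0)=0$ for $j\ge 1$, $B_{0,N}(0)=1$, and symmetrically at $t=1$) to make the convex-hull statement precise, since combined with the partition of unity this exhibits $f(t)$ as a convex combination of the $c_j$ for each $t\in[0,1]$, hence $f(t)$ lies in their convex hull.

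Third, the \emph{tangent/endpoint property}: evaluate using the previous boundary observations. At $t=0$ every term with $j\ge 1$ carries a factor $t^{j}$ that vanishes, leaving $f(0)=c_0B_{0,N}(0)=c_0$; at $t=1$ every term with $j\le N-1$ carries a factor $(1-t)^{N-j}$ that vanishes, leaving $f(1)=c_N B_{N,N}(1)=c_N$. For the tangent direction at the endpoints one differentiates termwise (justified by the smoothness part below) and uses $B_{j,N}'(t)=N\bigl(B_{j-1,N-1}(t)-B_{j,N-1}(t)\bigr)$, which evaluated at $t=0$ and $t=1$ yields $f'(0)=N(c_1-c_0)$ and $f'(1)=N(c_N-c_{N-1})$, so the curve is tangent to the first and last legs of the control polygon. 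Finally, \emph{smoothness}: $B_{j,N}$ is a polynomial of degree $N$, hence $C^{\infty}$ on $\mathbb{R}$; in particular it is $N-1$ times (indeed infinitely often) continuously differentiable, and regarded as a degree-$N$ polynomial its $(N-1)$st derivative is still nonconstant-affine, which is the sense of ``increasing degree increases regularity'' relevant when these curves are concatenated.

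\medskip

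\noindent\textbf{Main obstacle.} Honestly, there is no real obstacle: every item is a direct computation from the explicit formula, and the only place needing a word of care is making the ``convex hull'' and ``tangent'' remarks precise rather than merely heuristic — i.e.\ stating that positivity \emph{together with} partition of unity is what yields the convex-combination representation, and differentiating the curve termwise to read off the endpoint tangents. If one instead wanted to avoid the closed form entirely, the slightly more delicate route would be to derive all four properties from the De Casteljau recursion by induction on $N$, where the bookkeeping of the boundary conventions $B_{-1,N-1}=B_{N,N-1}=0$ is the only subtle point.
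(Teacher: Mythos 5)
Your proposal is correct, but note that the paper itself offers no proof of this lemma: it is stated as a list of recalled standard properties, with the reader referred to the cited literature (Prautzsch et al.) for details. So there is no in-paper argument to compare against; your write-up simply supplies the standard derivations that the paper leaves implicit, and all four of them are sound — the binomial-theorem identity for the partition of unity, strict positivity of $\binom{N}{j}t^{j}(1-t)^{N-j}$ on the open interval, endpoint evaluation for the tangent property, and the observation that polynomials are $C^{\infty}$.

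Two small points in your favour that go beyond the paper's statement. First, you correctly flag that strict positivity fails at the endpoints (since $B_{j,N}(0)=0$ for $j\geqslant 1$ and $B_{j,N}(1)=0$ for $j\leqslant N-1$), so the lemma's claim ``if $t\in[0,1]$ then $B_{j,N}(t)>0$'' should really read $t\in(0,1)$, or $B_{j,N}(t)\geqslant 0$ on the closed interval; and you correctly emphasise that the convex-hull conclusion needs positivity \emph{and} the partition of unity together. Second, the smoothness item as stated (``$N-1$ times continuously differentiable'') understates the truth for a single Bernstein polynomial, which is $C^{\infty}$; the $C^{N-1}$ figure is the one relevant for degree-$N$ splines joined at knots, and your remark about concatenation is the right way to make sense of it. Neither point is a gap in your argument — both are defects of the statement as printed that your proof quietly repairs.
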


\subsection{Quantitative envelopes for the  Bézier curve } \label{subSecBezierPolygon}
Working with the Bézier curve control points in place of the curve itself allows a simpler explicit representation. However, since our framework is not based on the Bézier curve itself, we are interested in the localisation of the Bézier curve with respect to its control points, \ie the control polygon. In this part, we review a result on \textit{sharp quantitative bounds} between the Bézier curve and its control polygon \cite{Nairn1999,DavidLu1999}. For instance, in the case of a quadrotor (discussed in Section \ref{sec2Quad}), once we have selected the control points for the reference trajectory, these envelopes describe the exact localisation of the quadrotor trajectory and its distance from the obstacles. These quantitative envelopes may be of particular interest when avoiding corners of obstacles which traditionally in the literature \cite{Radmanesh2016} are modelled as additional constraints or introducing safety margin around the obstacle.

We start by giving the definition for the control polygon.
\begin{definition}\label{DefControlPolygon}
	(Control polygon for Bézier curves (see \cite{Nairn1999})). Let $f=\sum_{j=0}^{N}c_j B_{j,N} (t) $ be a scalar-valued Bézier curve. \textit{The control polygon} $\Gamma_f=\sum_{j=0}^{N}c_j H_{j} (t)$ of $f$ is a piecewise linear function connecting the points with coordinates $(t_j^*, c_j)$ for $j=0, \ldots, N$ where the first components $t_j^* = \frac{j}{N}$ are the Greville abscissae. The hat functions $H_j$ are piecewise linear functions defined as: 
	$$H_j(t) = \begin{cases}
&\frac{t-t^*_{j-1}}{t^*_{j} -t^*_{j-1}} \quad t \in [t^*_{j-1}, t^*_j]\\
 &\frac{t^*_{j+1} -t}{t^*_{j+1} -t^*_{j}} \quad t \in [t^*_{j}, t^*_{j+1}]\\
 &0 \quad \text{otherwise}.
	\end{cases}$$
\end{definition}

An important detail is the \textit{maximal distance} between a Bézier segment and its control polygon. For that purpose, we recall a result from \cite{Nairn1999}, where sharp quantitative bounds of control polygon distance to the Bézier curve are given.

\begin{theorem} \label{ThConvergenceControlPolygon}
	(See \cite{Nairn1999}, Theorem 3.1) Let $f =\sum_{j=0}^{N}c_j B_{j,N} $ be a scalar Bézier curve and let $\Gamma_f$ be its control polygon. Then the maximal distance from $f$ to its control polygon is bounded as:
	\begin{equation}
	\norm{f  - \Gamma_f}_{\infty,\left[0, 1 \right] }\leqslant \mu_{\infty}(N) \norm{\Delta_2 c}_{\infty} =D_{\max}
	\end{equation}
	where the constant $\mu_{\infty}(N)=\dfrac{\lfloor N/2\rfloor \lceil N/2\rceil }{2N}$ 	
\footnote{	Note that the notation$\lceil x \rceil$ means the ceiling of $x$, i.e. the smallest integer greater than or equal to $x$, and the notation$\lfloor x \rfloor$ means the floor of $x$, i.e. the largest integer less than or equal to $x$.}	only depends on the degree $N$ and the second difference of the control points $\norm{\Delta_2 c}_{\infty } := \max_{0 < j < N} \vert \Delta_2 c_j \vert$. 
\end{theorem}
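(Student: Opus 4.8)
The plan is to prove the bound by reducing the general degree-$N$ case to the quadratic case and then estimating the quadratic case directly. First I would observe that the difference $f - \Gamma_f$ vanishes at every Greville abscissa $t_j^* = j/N$, since $f(t_j^*)$ and $\Gamma_f(t_j^*)$ agree there by the interpolation property of the hat functions; hence it suffices to bound $\|f-\Gamma_f\|_\infty$ on each subinterval $[t_{j-1}^*, t_j^*]$ separately and take the maximum. On such a subinterval the control polygon is an affine function, so $f - \Gamma_f$ is the remainder of linear interpolation of $f$ at the two endpoints of the subinterval.

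Next I would bring in the classical error formula for linear interpolation: on $[t_{j-1}^*, t_j^*]$ one has $|f(t) - \Gamma_f(t)| \le \tfrac{1}{8}(t_j^* - t_{j-1}^*)^2 \sup |f''|$, i.e.\ $\tfrac{1}{8N^2}\sup|f''|$. The key structural fact about Bézier curves is that the second derivative is itself a Bézier curve whose control points are (up to the factor $N(N-1)$) the second differences $\Delta_2 c_j$; by the convex hull / partition of unity property (the Lemma on Bernstein properties above), $\|f''\|_\infty \le N(N-1)\,\|\Delta_2 c\|_\infty$. Combining gives a bound of the form $\tfrac{N(N-1)}{8N^2}\|\Delta_2 c\|_\infty$, which is already $O(\|\Delta_2 c\|_\infty)$ but with a constant weaker than the sharp $\mu_\infty(N) = \lfloor N/2\rfloor\lceil N/2\rceil/(2N)$.

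To get the sharp constant I would not use the crude interpolation estimate but instead express $f - \Gamma_f$ exactly. The approach is to write $f - \Gamma_f$ in the Bernstein-like basis of the hat functions' "correction" terms: on each subinterval, subdivide the Bézier curve (De Casteljau) so that $f$ restricted to $[t_{j-1}^*, t_j^*]$ is again a Bézier curve of degree $N$, and the deviation from its own chord can be written as a nonnegative combination (via a discrete Green's function / second-difference summation by parts) of the second differences $\Delta_2 c_i$ with explicit nonnegative weights. Then $\|f - \Gamma_f\|_\infty$ on that subinterval is bounded by $\|\Delta_2 c\|_\infty$ times the sup over $t$ of the sum of those weights, and one shows this sum is maximized at the "middle" subinterval(s) and equals exactly $\mu_\infty(N)$. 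This is the delicate combinatorial computation — identifying the Green's function for the second-difference operator against the Bernstein basis and evaluating its maximum — and it is where the floor/ceiling structure of $\mu_\infty(N)$ emerges (even versus odd $N$ split the extremal subinterval differently).

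The main obstacle I expect is precisely this last step: producing the exact representation of $f - \Gamma_f$ as an explicitly-signed combination of the $\Delta_2 c_j$ and then proving the weight-sum maximum is attained and equals $\lfloor N/2\rfloor\lceil N/2\rceil/(2N)$, including the sharpness claim (exhibiting control points for which equality holds — typically $c_j$ chosen so that $\Delta_2 c_j$ alternates or is concentrated so as to align all contributions in sign at the extremal parameter). Everything before that — the vanishing at Greville abscissae, the reduction to per-subinterval linear-interpolation remainder, and the second-derivative-is-a-Bézier-curve identity — is routine and follows from the Bernstein properties already recalled in the Lemma above; I would cite \cite{Nairn1999} for the detailed evaluation of the extremal constant rather than reproduce the full computation here.
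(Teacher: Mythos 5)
The paper itself does not prove this theorem: it is recalled verbatim from \cite{Nairn1999} with a citation, so your sketch has to be measured against the argument in that reference. The central step of your sketch is false. You assert that $f-\Gamma_f$ vanishes at every Greville abscissa $t_j^*=j/N$ because ``$f(t_j^*)$ and $\Gamma_f(t_j^*)$ agree there by the interpolation property of the hat functions.'' They do not agree. The control polygon satisfies $\Gamma_f(t_j^*)=c_j$ by construction of the $H_j$, but a B\'ezier curve does not interpolate its control points: $f(t_j^*)\neq c_j$ for $0<j<N$ in general --- this is precisely item 3 of the Bernstein-properties Lemma recalled earlier in the paper, which states that the curve never passes through the intermediate control points. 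Consequently $f-\Gamma_f$ is \emph{not} the remainder of linear interpolation of $f$ at the endpoints of $[t_{j-1}^*,t_j^*]$, the $\tfrac18 h^2\sup|f''|$ estimate does not apply to it, and your later refinement via De Casteljau subdivision and ``the deviation from its own chord'' inherits the same confusion: the chord of the subdivided piece joins the curve values $f(t_{j-1}^*)$ and $f(t_j^*)$, which is a different line from the control-polygon segment joining $c_{j-1}$ and $c_j$. One could try to repair this with a triangle inequality, bounding separately the chordal deviation and $\max_j|f(t_j^*)-c_j|$, but the second term is itself of order $\norm{\Delta_2 c}_{\infty}$ and this route does not recover the sharp constant $\mu_\infty(N)$.

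The ingredient your sketch is missing is \emph{linear precision}: both bases reproduce affine data with the same coefficients, i.e.\ $\sum_j B_{j,N}=\sum_j H_j=1$ and $\sum_j t_j^* B_{j,N}(t)=t=\sum_j t_j^* H_j(t)$, so the map $c\mapsto f-\Gamma_f=\sum_j c_j\bigl(B_{j,N}-H_j\bigr)$ annihilates every control net that is affine in $j$. Two summations by parts then yield the exact representation $f(t)-\Gamma_f(t)=\sum_{0<j<N}\Delta_2 c_j\,\beta_{j,N}(t)$ with explicitly computable weight functions $\beta_{j,N}\geqslant 0$, and the theorem reduces to the evaluation $\max_{t\in[0,1]}\sum_{0<j<N}\beta_{j,N}(t)=\mu_\infty(N)$, with sharpness exhibited by taking $\Delta_2 c_j$ constant (so that all contributions align in sign) and evaluating at the extremal parameter. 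This is essentially the ``discrete Green's function'' computation you anticipate at the end of your proposal, but it must be carried out globally on $[0,1]$ against the difference basis $B_{j,N}-H_j$, not subinterval by subinterval after a reduction that presupposes interpolation at the Greville points.
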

The $j^\th$ second difference of the control point sequence $c_j$ for $j=0, \ldots, N$ is given by:
\begin{equation*}
\Delta_2 c_j = c_{j-1} - 2 c_j + c_{j+1}.
\end{equation*}

Based on this maximal distance, \textit{Bézier curve's envelopes} are defined as two piecewise linear functions: 
\begin{itemize}
\item the \textit{lower envelope} $\underline{ \Gamma}_f = \sum_{j=0}^{N}\underline{e}_j H_{j} =  \sum_{j=0}^{N} (c_j-D_{\max}) H_{j}$ and,
\item the \textit{upper envelope} $\bar{\Gamma}_f =\sum_{j=0}^{N}\bar{e}_j H_{j} =  \sum_{j=0}^{N} (c_j+D_{\max}) H_{j}$  
\end{itemize}
such that $\underline{ \Gamma}_f\leqslant f  \leqslant  \bar{\Gamma}_f$. \\
 The envelopes are improved by taking $\underline{e}_0 = \bar{e}_0 = c_0$ and $\underline{e}_N = \bar{e}_N = c_N$ and then \textit{clipped} with the standard Min-Max bounds \footnote{Unfortunately the simple Min-Max bounds define very large envelopes when applied solely.}. The Min-Max bounds yield rectangular envelopes that are defined as
\begin{definition}
(Min-Max Bounding box (see \cite{Prautzsch2002})). Let $f=\sum_{j=0}^{N}c_j B_{j,N} $ be a Bézier curve. As a consequence of the convex-hull property, \textit{a min-max bounding box} is defined for the Bézier curve $f$  as:
$$  \min_{0 < j < N} c_j \leqslant\sum_{j=0}^{N}c_j B_{j,N}  \leqslant  \max_{0 < j < N} c_j.$$
\end{definition}

\begin{remark}
	As we notice, the maximal distance between a Bézier segment and its control polygon is bounded in terms of	the second difference of the control point sequence and a constant that depends only on the degree of the polynomial. Thus, by elevating the degree of the Bézier control polygon, \ie the subdivision (without modifying the Bézier curve), we can arbitrary reduce the distance between the curve and its control polygon.
\end{remark}

\subsection{Symbolic Bézier operations} \label{subSecBezierOperation}
In this section, we present the Bézier operators needed to find the Bézier control points of the states and the inputs.
Let the two polynomials $f(t)$ (of degree $m$) and $g(t)$ (of degree $n$) with \textit{control points} $f_j$ and $g_j$ be defined as follows:

\begin{align*}
f(t)=  \sum\limits_{j=0}^m f_j B_{j,m}(t), \quad 0 \leqslant t \leqslant 1  \\
g(t)=  \sum\limits_{j=0}^n g_j B_{j,n}(t), \quad 0 \leqslant t \leqslant 1
\end{align*}

We now show how to determine the control points for the degree elevation and for the arithmetic operations (the sum, difference, and product of these polynomials). For further information on Bézier operations, see \cite{Farouki1988}. Some illustrations of the geometrical significance of these operations are included in the Appendix \ref{app:1-BezierGeomSign}.

\paragraph{Degree elevation:}
To increase the degree from $n$ to $n+r$ and the number of control points from $n+1$ to $n+r+1$ without changing the shape, the new control points $b_j$ of the $(n+r)$th Bézier curve are given by:
\begin{equation}\label{eqDegElev}
b_j = \sum_{i=\max(0,j-r)}^{\min(n,j)} \dfrac{\binom{n}{i} \binom{r}{j-i}}{\binom{n+r}{j}} g_i \quad \quad j = 0,1, \ldots, n+r
\end{equation}
The latter constitutes the so-called \textit{augmented control polygon}. The new control points are obtained as convex combinations of the original control points. This is an important operation exploited in addition/subtraction of two control polygons of different lengths and in approaching the curve to a new control polygon by refining the original one.

\paragraph{Addition and subtraction:} 
If $m=n$ we simply add or subtract the coefficients
\begin{equation}
f(t) \pm g(t) = \sum\limits_{j=0}^m (f_j \pm g_j) B_{j,m}(t) \\
\end{equation}
If $m>n$, we need to first elevate the degree of $g(t)$ $m-n$ times using \eqref{eqDegElev} and then add or subtract the coefficients.

\paragraph{Multiplication:}
Multiplication of two polynomials of degree $m$ and $n$ yields a degree $m+n$ polynomial
\begin{equation}\label{eqBezierMultip}
f(t)g(t) = \sum_{j=0}^{m+n} \underbrace{ \left(  \sum_{i=\max(0, j-n)}^{\min(m,j)} \dfrac{\binom{m}{i} \binom{n}{j-i}}{\binom{m+n}{j}} f_i g_{j-i} \right) }_{\text{Control points of the product}}B_{j, m+n} (t)
\end{equation}

\subsection{Bézier time derivatives}

We give the derivative property of the Bézier curve  in Proposition \ref{Pro1} which is crucial in establishing the constrained trajectory procedure.

\begin{lemma} (see \cite{Lyche2002})
	The derivative of the $j$th Bernstein function of degree $n \geqslant 1$ is given by
	\begin{equation}
	DB_{j,N}(t) = N \left( B_{j-1,N-1}(t) - B_{j,N-1}(t)  \right) \text{ for } j = 0, \ldots, N. 
	\end{equation} 
	for any real number $t$ and where $B_{-1,N-1} =B_{N,N-1} =0$.\\
\end{lemma}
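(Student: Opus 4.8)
The plan is to prove the identity by direct differentiation of the closed form $B_{j,N}(t) = \binom{N}{j}(1-t)^{N-j}t^{j}$ and then re-index the two resulting terms so that each is recognised as a Bernstein polynomial of degree $N-1$. An alternative route via the De Casteljau recursion $B_{j,N}(t) = (1-t)B_{j,N-1}(t) + t B_{j-1,N-1}(t)$ together with induction on $N$ is also available, but the direct computation is shorter and keeps the boundary conventions transparent.

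First I would apply the product rule to $B_{j,N}(t) = \binom{N}{j}(1-t)^{N-j}t^{j}$, which for an interior index $1 \leqslant j \leqslant N-1$ gives
\[
DB_{j,N}(t) = \binom{N}{j}\Bigl( j\, t^{j-1}(1-t)^{N-j} - (N-j)\, t^{j}(1-t)^{N-j-1} \Bigr).
\]
The key algebraic step is then the pair of elementary binomial identities $\binom{N}{j} j = N \binom{N-1}{j-1}$ and $\binom{N}{j}(N-j) = N \binom{N-1}{j}$, both immediate from the factorial definition of the binomial coefficient. Substituting them, the first summand becomes $N \binom{N-1}{j-1} t^{j-1}(1-t)^{(N-1)-(j-1)} = N B_{j-1,N-1}(t)$ and the second becomes $N \binom{N-1}{j} t^{j}(1-t)^{(N-1)-j} = N B_{j,N-1}(t)$, which yields the claimed formula $DB_{j,N}(t) = N\bigl(B_{j-1,N-1}(t) - B_{j,N-1}(t)\bigr)$.

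Finally I would dispatch the two endpoints $j=0$ and $j=N$ separately (these also cover the degenerate case $N=1$, where no interior index exists). For $j=0$ the factor $j$ annihilates the first term of the derivative, leaving $-N\binom{N-1}{0}(1-t)^{N-1} = -N B_{0,N-1}(t)$, which matches the formula under the stated convention $B_{-1,N-1}\equiv 0$; symmetrically, for $j=N$ the factor $(N-j)$ annihilates the second term and the formula holds with $B_{N,N-1}\equiv 0$. The only real obstacle here is clerical rather than conceptual: keeping the exponents and binomial indices aligned through the re-indexing and verifying that the degenerate endpoint terms agree with the conventions $B_{-1,N-1}=B_{N,N-1}=0$.
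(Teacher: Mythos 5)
Your proof is correct. Note that the paper itself gives no proof of this lemma --- it is simply cited from the spline literature (Lyche--M{\o}rken) --- so there is nothing to compare against; your direct computation is the standard argument. The product-rule differentiation, the two binomial identities $\binom{N}{j}j = N\binom{N-1}{j-1}$ and $\binom{N}{j}(N-j) = N\binom{N-1}{j}$, the re-indexing of the exponents via $N-j = (N-1)-(j-1)$, and the separate check of the endpoint cases $j=0$ and $j=N$ against the conventions $B_{-1,N-1}=B_{N,N-1}=0$ are all handled correctly.
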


\begin{proposition}\label{Pro1}
	If the flat output or the reference trajectory $y$ is a Bézier curve, its derivative is still a Bézier curve and we have an explicit expression for its control points.
\end{proposition}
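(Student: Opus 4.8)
The plan is to give a constructive proof by differentiating the Bernstein expansion term by term and re-indexing the sum so that the result is again expressed in a Bernstein basis, thereby exhibiting the control points explicitly. First I would write $y(t)=\sum_{j=0}^{N}\alpha_j B_{j,N}(t)$ and apply the derivative formula for the Bernstein polynomials recalled in the preceding lemma, namely $DB_{j,N}(t)=N\bigl(B_{j-1,N-1}(t)-B_{j,N-1}(t)\bigr)$ with the convention $B_{-1,N-1}=B_{N,N-1}=0$. Linearity of the derivative then gives
\begin{equation*}
\dot y(t)=\sum_{j=0}^{N}\alpha_j\, N\bigl(B_{j-1,N-1}(t)-B_{j,N-1}(t)\bigr).
\end{equation*}

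Next I would collect the coefficient of each $B_{k,N-1}(t)$ for $k=0,\ldots,N-1$. Shifting the index in the first part of the sum ($j-1\mapsto k$) and keeping the second part as is ($j\mapsto k$), and using the boundary conventions to drop the out-of-range terms, the coefficient of $B_{k,N-1}(t)$ is $N(\alpha_{k+1}-\alpha_k)$. Hence
\begin{equation*}
\dot y(t)=\sum_{k=0}^{N-1}\alpha_k^{(1)}B_{k,N-1}(t),\qquad \alpha_k^{(1)}:=N(\alpha_{k+1}-\alpha_k),
\end{equation*}
which shows $\dot y$ is a Bézier curve of degree $N-1$ with the stated control points (the first forward differences of the original control points, scaled by $N$). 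Since this expression again belongs to $\mathscr{C}^{\mathrm{app}}$, the same argument applies to $\dot y$, so by a trivial induction the $\ell$-th derivative $y^{(\ell)}$ is a Bézier curve of degree $N-\ell$ whose control points are $\tfrac{N!}{(N-\ell)!}$ times the $\ell$-th forward differences of the $\alpha_j$; this gives the explicit expression claimed.

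There is essentially no hard step here: the only point requiring a little care is the bookkeeping at the endpoints of the sum, where the conventions $B_{-1,N-1}=B_{N,N-1}=0$ must be invoked so that the re-indexed sum runs cleanly from $k=0$ to $N-1$ without spurious boundary terms; one should also note the degenerate case $N=0$ (a constant curve), where the derivative is identically zero and the statement holds vacuously. I would close by remarking that, because degree elevation (Equation \eqref{eqDegElev}) lets one raise any of these derivative curves back to a common degree, the derivative property is fully compatible with the arithmetic operations of Section \ref{subSecBezierOperation}, which is exactly what is needed to propagate the parameterisation \eqref{eq:flat_param} through the flat-output derivatives in the constrained-trajectory procedure.
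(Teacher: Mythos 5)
Your proof is correct and follows essentially the same route as the paper: differentiate the Bernstein expansion using the formula $DB_{j,N}(t)=N\bigl(B_{j-1,N-1}(t)-B_{j,N-1}(t)\bigr)$, re-index to read off the new control points as scaled forward differences of the old ones, and iterate to higher derivatives (your explicit re-indexing is in fact more detailed than the paper's, which merely states the resulting recursion $c_j^{(q)}$). The only difference is that the paper carries a time-scaling factor coming from the reparameterisation $t=T\tau$, which you omit by working directly on $[0,1]$; this does not affect the substance of the argument.
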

\begin{proof}
	Let $y^{(q)}(t)$ denote the $q$th derivative of the flat output $y(t)$. 
	We use the fixed time interval $T =t_f -t_0$ to define the time as $t = T \tau,$  $0\leqslant \tau \leqslant 1$.
	We can obtain $y^{(q)}(\tau)$ by computing the $q$th derivatives of the Bernstein functions. 
	\begin{equation}
	y^{(q)}(\tau)= \dfrac{1}{T^{q}} \sum_{j=0}^{N} c_j B_{j,N}^{(q)}(\tau)  
	\end{equation}
	
	Letting $c_j^{(0)}= c_j$, we write
	\begin{equation}
	y(\tau) = y^{(0)}(\tau) = \sum_{j=0}^{N} c_j^{(0)} B_{j,N}(\tau)  
	\end{equation}
	
	Then,
	\begin{equation}\label{eqBézierDeriv}
	y^{(q)}(\tau)  = \sum_{j=0}^{N-q} c_j^{(q)} B_{j,N-q}(\tau)  
	\end{equation}
	
	with derivative control points such that
	\begin{align}\label{eqControlPointsDerivative}
	c_j^{(q)} = \begin{cases}
	c_j,\quad  
	&q=0\\[.5ex]
	\: \dfrac{(N-q+1)}{T^{q}}\left( c_{j+1}^{(q-1)} - c_j^{(q-1)}\right) , &q>0. 
	\end{cases}
	\end{align}\\
\end{proof}

We can deduce the explicit expressions for all lower order derivatives up to order $N-1$.
This means that if the reference trajectory $y_r(t)$ is a Bézier curve of degree $N > q $ ($q$ is the derivation order of the flat output $y$), by differentiating it, all states and inputs are given in straightforward Bézier form.

\begin{example}

 Through a simple example of a double integrator, we want to represent the link between the time interval and the time derivatives. For a changing position $y$, its time derivative $\dot y$ is its velocity, and its second derivative with respect to time $\ddot y$, is its acceleration. Even higher derivatives are sometimes also used: the third derivative of position with respect to time is known as the jerk. 

We here want to show \textit{the effect of the fixed time period }$T$ on the velocity, acceleration, etc. We remark the connection between the time scaling parameter appearing in the trajectory parameterization. We have a simple double integrator defined as:
\begin{equation}
\ddot y = u
\end{equation}

As a reference trajectory, we choose a Bézier curve $y = \sum_{i=0}^N a_i B_{i,N}$ of order $N=4$.  Due to the Bézier derivative property, we can explicitly provide the link between the time interval $T$ and control points of the Bézier curve's derivatives.
\begin{subequations}
	\begin{align}
	\dot y = \sum_{i=0}^{N-1}  a_i^{(1)} B_{i,N-1} \\
	\ddot y = \sum_{i=0}^{N-2} a_i^{(2)} B_{i,N-2} 
	\end{align}
\end{subequations}
where $a_i^{(1)}$ and $a_i^{(2)}$ are the control points of the first and the second derivative of the B-spline curve respectively. 
We have the expressions of the $a_i^{(1)}$ and $a_i^{(2)}$  in terms of the $a_i$. This fact allow us to survey when the desired reference trajectory will respect the input constraints i.e. $a_i^{(2)} = f_1(a_i^{(1)} ) = f_2( a_i)$. That means that if $ \forall a_i^{(2)} <K$ then $u<K$.

\begin{figure}
	\begin{center}\hspace*{-15ex}
		\includegraphics[width=7in]{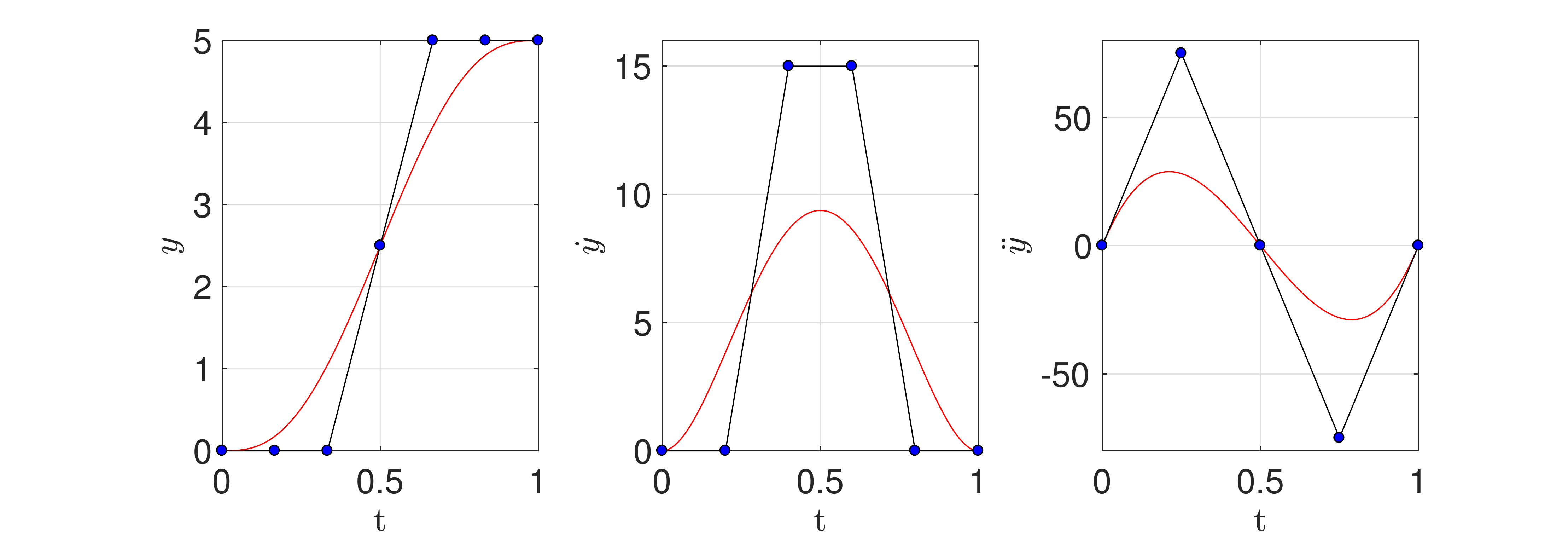}
		\caption{The time derivatives when $T=1$ }\label{figTimeDer1}
	\end{center}
\end{figure} 
\begin{figure}
	\begin{center}\hspace*{-15ex}
		\includegraphics[width=7in]{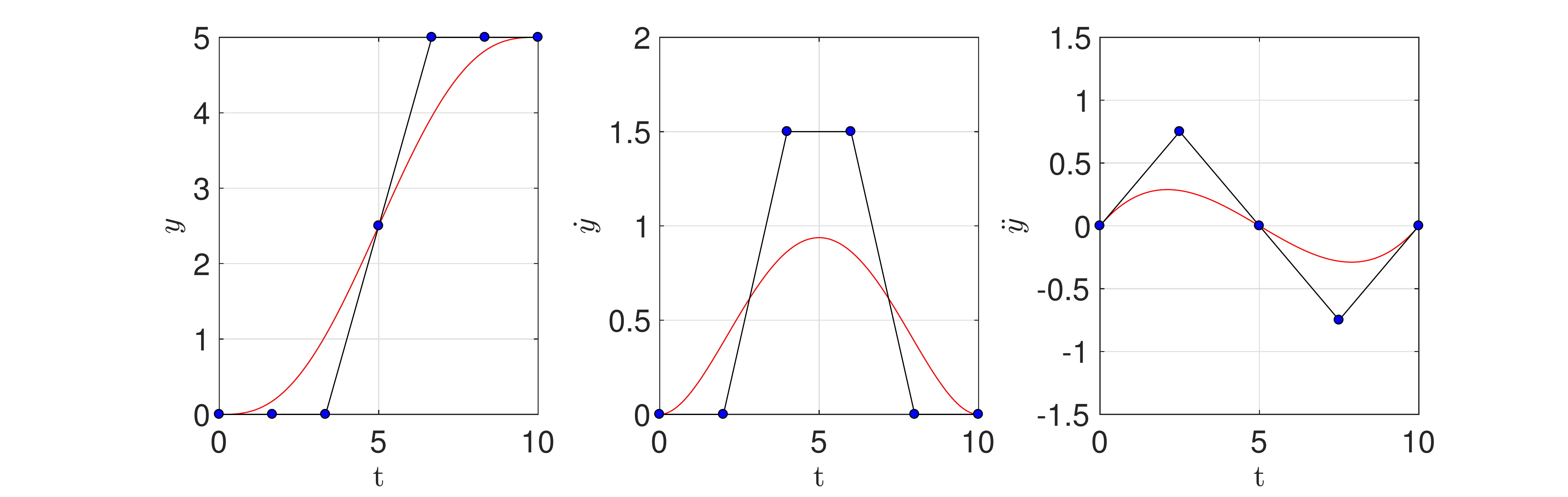}
		\caption{The time derivatives when $T=10$ }\label{figTimeDer10}
	\end{center}
\end{figure} 
\begin{figure}
	\begin{center}\hspace*{-15ex}
		\includegraphics[width=7in]{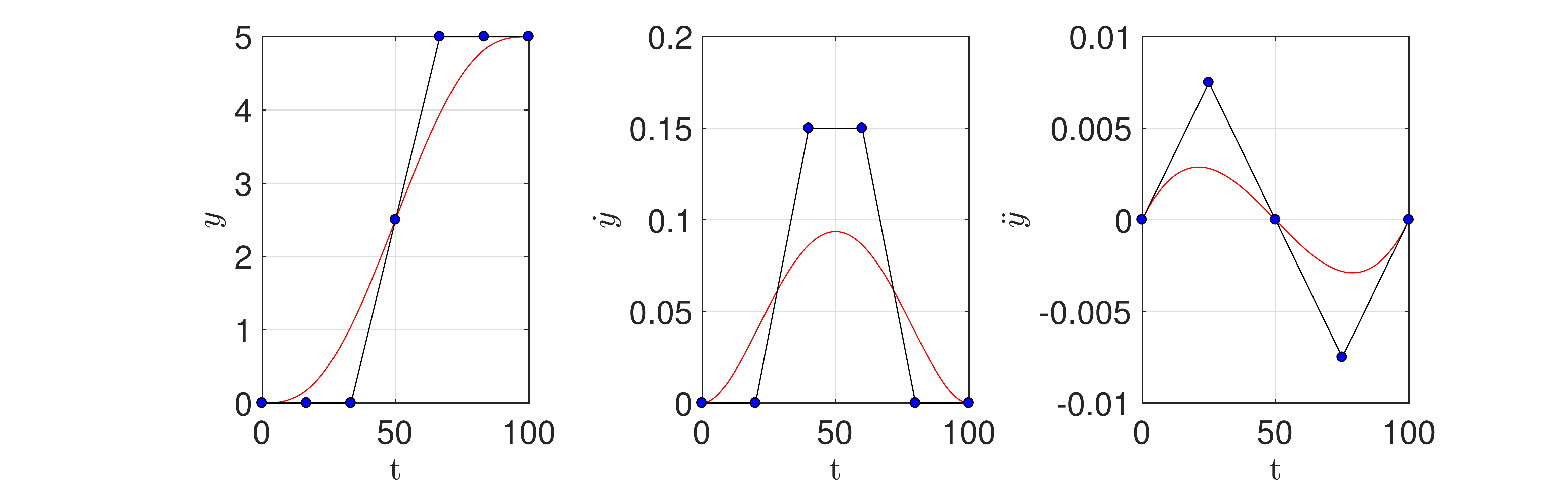}
		\caption{The time derivatives when $T=100$ }\label{figTimeDer100}
	\end{center}
\end{figure} 

\end{example}

\begin{proposition}\label{thBézierFlatness}
	If we take a Bézier curve as reference trajectory $y_r(t)=  \sum\limits_{j=0}^N c_j B_{j,N}(t)$  for a flat system such that the input is a polynomial function of the flat output and its derivatives, then the open loop input is also a Bézier curve
	$u_r = B(y_r,..., y_r^{(q)}) = \sum\limits_{i=0}^m U_i B_{i,m}(t) $.\\
\end{proposition}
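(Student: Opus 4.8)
The plan is to build up the result by composing the closure properties of Bézier curves established earlier in Section \ref{SecTraj}. First I would observe that the hypothesis gives us an input expression of the form $u_r = B(y_r, \dot y_r, \ldots, y_r^{(q)})$ where $B$ is a \emph{polynomial} function of its arguments. The argument proceeds in two stages: (i) show each derivative $y_r^{(k)}$ for $k = 0, \ldots, q$ is itself a Bézier curve with explicitly computable control points, and (ii) show that any polynomial expression built from finitely many Bézier curves is again a Bézier curve.

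For stage (i), I would invoke Proposition \ref{Pro1} directly: if $y_r$ is a Bézier curve of degree $N$, then $\dot y_r$ is a Bézier curve of degree $N-1$ with control points given by the explicit formula \eqref{eqControlPointsDerivative}, and iterating this $q$ times (which requires $N \geqslant q$, consistent with the remark following Proposition \ref{Pro1}) yields $y_r^{(q)}$ as a Bézier curve of degree $N-q$ with control points that are explicit linear combinations of the $c_j$. So after this stage, all of $y_r, \dot y_r, \ldots, y_r^{(q)}$ are Bézier curves, each of a different degree in general.

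For stage (ii), I would note that a polynomial function $B$ of several variables is a finite sum of monomials, each monomial being a finite product of powers of the arguments, scaled by a real coefficient. The Bézier toolbox from Section \ref{subSecBezierOperation} handles exactly this: degree elevation \eqref{eqDegElev} lets us bring any two Bézier curves to a common degree; the multiplication formula \eqref{eqBezierMultip} shows the product of two Bézier curves (of degrees $m$ and $n$) is a Bézier curve of degree $m+n$ with control points given explicitly as bilinear expressions in the two control-point sequences; and the addition rule shows the sum of two Bézier curves of equal degree is a Bézier curve. Applying multiplication repeatedly handles each monomial, degree-elevating all resulting monomials to the maximal degree $\mathfrak{m}$ among them, and then summing gives $u_r = \sum_{i=0}^{\mathfrak{m}} U_i B_{i,\mathfrak{m}}(t)$, where each $U_i$ is a polynomial expression in the control points of $y_r, \ldots, y_r^{(q)}$, hence (composing with stage (i)) a polynomial in the $c_j$. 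Since $\mathscr{C}^{\mathrm{app}}$ is closed under all these operations (as stated in Section \ref{subSecBezierOperation}), $u_r$ indeed lies in the Bézier class.

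I do not expect a serious obstacle here, since every ingredient is quoted from earlier in the chapter; the statement is essentially a corollary of the closure properties plus Proposition \ref{Pro1}. The one point requiring a little care is bookkeeping the degrees: each derivative lowers the degree, each multiplication raises it, and one must degree-elevate consistently before adding, so the final degree $\mathfrak{m}$ depends on the structure (degrees of the monomials) of the polynomial $B$ and on $N$ and $q$. I would state $\mathfrak{m}$ explicitly in terms of these data if a sharp bound is wanted, but for the qualitative claim it suffices to note it is finite. A secondary subtlety, worth a sentence, is the time-scaling factor $1/T^q$ appearing in \eqref{eqControlPointsDerivative}: it is just a real scalar multiplying a Bézier curve, which is again a Bézier curve (a special case of the product rule with a degree-zero curve), so it causes no difficulty.
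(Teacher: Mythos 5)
Your argument is correct and is essentially the one the paper intends: the chapter leaves this proposition without an explicit proof, relying on the closure of Bézier curves under derivation (Proposition \ref{Pro1}), degree elevation, addition and multiplication stated in Sections \ref{subSecBezierOperation}--\ref{SecTraj} (and summarized in Section 3.3 as ``any polynomial integro-differential operator applied to a Bézier curve still produces a Bézier curve''), exactly as you compose them. Your stage (i)/(ii) decomposition, the degree bookkeeping, and the remark on the $1/T^q$ scaling all match what the paper carries out concretely in the vehicle example of Section \ref{secVehicle}.
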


\begin{remark}
	We should take a Bézier curve of degree $N > q$ to avoid introducing discontinuities in the control input. 
\end{remark}

\begin{example}
	In the case of a chain of integrators $u_r(t)= y_r^{(q)}(t) $ by imposing for all $K_l \leqslant c_j^{(q)} \leqslant K_h$, we ensure an input constraint $K_l \leqslant u_r(t) \leqslant K_h$.
\end{example}



\section{Constrained feedforward trajectory procedure}\label{SecProcedure}

We aim to find a feasible Bézier trajectory (or a set of feasible trajectories, and then make a suitable choice) $\Vect{y_r}(t)$ between the initial conditions $\Vect{y_r}(t_0) = \Vect{y}_{\text{initial}}$ and the final conditions  $\Vect{y_r}(t_f) = \Vect{y}_{\text{final}}$. We here show the procedure to obtain the \textit{Bézier control points} for the constrained nominal trajectories $(\Vect{y_r}, \Vect{x_r} ,\Vect{u_r})$. \\

Given a \textit{differentially flat system} $ \Vect{\dot x} = f(\Vect{x}, \Vect{u})$, the reference design procedure can be summarized as:
\begin{enumerate}
	\item 
	 Assign to each flat output (trajectory) $y_i$ a \textit{symbolic Bézier curve}   $y_r(t)=  \sum\limits_{j=0}^N \alpha_j B_{j,N}(t)$ of a suitable degree $N > q$ ($q$ is the time derivatives of the flat output) and where  $\Vect{\alpha} = (\alpha_0, \ldots, \alpha_N) \in \Reals^{N+1}$ are its control points.
	\item Compute the needed derivatives of the flat outputs using Equation \eqref{eqBézierDeriv}.
	\item Use the Bézier operations to produce the system model relationships \eqref{Eq_ConstraintsA1}-\eqref{Eq_ConstraintsA2}, and to find the \textit{state reference Bézier curve} $\Vect{x_r}(t)=  \sum\limits_{i=0}^m X_i B_{i,m}(t)$ and  \textit{input reference Bézier curve} $\Vect{u_r}(t)=  \sum\limits_{j=0}^m U_j B_{j,m}(t)$ respectively, such that $ (X_i, U_j)= r_{k} (\alpha_0, \ldots, \alpha_N), k=0, \ldots,m+n+2 )$ are functions of the \textit{output control points}.
	
	\item If needed, calculate the corresponding \textit{augmented control polygons} by elevating the degree of the original control polygons in order to be closer to the Bézier trajectory.
	\item Specify the initial conditions, final conditions, or intermediate conditions on the flat output or on any derivative of the flat output that represent a direct equality constraint on the Bézier control points. Each flat output trajectory has its control points fixed as follows:
	
	\begin{subequations}\label{eqOutputConstraints}
		\begin{align}
		\alpha_0^{(i)} &= y^{(i)}(t_0), \\
		\alpha_N^{(i)} &= y^{(i)}(t_f),  \text{  for } i = 0, \ldots\, q,\\
		\alpha_j &\in [\munderbar{\alpha_j}, \bar \alpha_j] \text{  for } j= 1, \ldots\, N-1,
		\end{align}
	\end{subequations}
	where  $\munderbar{\alpha_j}, \bar \alpha_j \in \Reals$ are the limits of the $j^\th$ control point. 
	By using the Bézier properties, we will construct a set of constraints by means of its control points.
	We have a special case for the paralellotope where the first and last control point are fixed  $\munderbar{\alpha_0} =  \bar \alpha_0= y(t_0)$  and $\munderbar{\alpha_N} =  \bar \alpha_N= y(t_f)$ respectively.

	\item We consider a constrained method based on the Bézier control points since the control point polygon captures important geometric properties of the Bézier curve shape. The conditions on the output Bézier control points $\alpha_j$, the state Bézier control points $X_i$ and the the input control points $U_j$ result in a semi-algebraic set (system of polynomial equations and/or inequalities) defined as:

	\begin{equation} \label{eqSASys}
	\mathscr{I}(r, \mathbb{A})=  \left\lbrace \Vect{ \alpha} \in   \mathbb{A} \:| \:	r_k(\Vect{ \alpha}) *_k 0 , k \in \left\lbrace 1, \ldots, l\right\rbrace , *_k \in  \left\lbrace <, \leqslant, >, \geqslant, = , \neq \right\rbrace  \right \rbrace 
	\end{equation}
	Depending on the studied system, the output constraints can be defined as in equation \eqref{parallelo}, or remain as $\mathbb{A} = \Reals^{N+1}$.
	\item Find the regions of the \textit{control points} $\alpha_j$, $j= 1, \ldots\, N-1,$ solving the system of equality/inequalities \eqref{eqSASys} by using an appropriate method. We present two kind of possible methods in Section \ref{sec4Feasibility}.
\end{enumerate}

\section{Feasible control points regions\label{sec4Feasibility}}

Once we transform all the system trajectories through the symbolic Bézier flat output, the problem is formulated as a system of functions (equations and inequalities) with Bézier control points as parameters (see equation \eqref{eqSASys}). Consequently the following question raises:
\begin{question}
	  How to find the regions in the space of the parameters (Bézier control points) where the system of functions remains valid \ie the constrained set of feasible feed-forwarding trajectories?
\end{question}

This section has the purpose to answer the latter question by reviewing two methods from \textit{semialgebraic geometry} \footnote{The theory that studies the real-number solutions to algebraic inequalities with-real number coefficients, and mappings between them, is called semialgebraic geometry.} :\\

In \textit{the first method}, we formulate the regions for the reference trajectory control points search as a \textit{Quantifier Elimination (QE)} problem. The QE is a powerful procedure to compute an \textit{equivalent} quantifier-free formula for a given first-order formula over the reals \cite{Tarski1998, Coste2002}. Here we briefly introduce the QE method.\\
Let  $f_i(X, U) \in \Rationals [X,U], i = 1, \ldots, l$ be polynomials with rational coefficients where:
\begin{itemize}
	\item $X = (x_1, \ldots, x_n) \in \Reals^n$ is a vector of quantified variables
	\item $U= (u_1, \ldots, u_m) \in \Reals^m$ is a vector of unquantified (free) variables.
\end{itemize}
The \textit{quantifier-free} Boolean formula $\varphi(X, U)$ is a combined expression of polynomial equations ($f_i(X, U)= 0$) , inequalities  ($f_i(X, U)\leq 0$), inequations ($f_i(X, U)\neq 0$) and strict inequalities  ($f_i(X, U) > 0$) that employs the logic operators $\land$ (and), $\lor$ (or), $\Rightarrow$ (implies) or $\Leftrightarrow $ (equivalence). \\
A prenex or \textit{first-order formula} is defined as  follows:
$$G(X, U) = (Q_1 x_1) \ldots (Q_n x_n) [\varphi(X, U)]$$
where $Q_i$ is one of the quantifiers $\forall$(for all) and $\exists$ (there exists). 
Following the Tarski Seidenberg theorem (see \cite{Coste2002}), for every prenex formula $G(X,U)$ there exists an equivalent quantifier-free formula $\psi(U)$ defined by the free variables. \\

 The goal of the QE procedure is to compute an equivalent quantifier free formula $\psi(U)$ for a given first-order formula.  It finds the feasible regions of free variables $U$ represented as semialgebraic set where $G(X,U)$ is true. If the set $U$ is non-empty, there exists a point $u \in \Reals^m$ which simultaneously satisfies all of the equations/inequalities. Such a point is called a feasible point and the set $U$ is then called feasible. If  the set $U$ is empty, it is called unfeasible. In the case when $m=0$,  $i.e$. when all variables are quantified, the QE procedure decides whether the given formula is true or false (decision problem). 
For instance, 
\begin{itemize}
	\item given a first order formula $\forall x\,[x^2 + bx+c>0]$, the QE algorithm gives the equivalent quantifier free formula $b-4c<0$;
	\item given a first order formula $\exists x\,[ax^2 + bx+c=0]$, the QE algorithm gives the equivalent quantifier free formula $(a\neq 0 \land b^2 -4ac \geq 0) \lor (a=0 \land b \neq 0) \lor (a=0 \land b=0 \land c=0).$
\end{itemize}

As we can notice, the quantifier free formulas represent the semi-algebraic sets (the conditions) for the unquantified free variables verifying the first order formula is \textit{true}. Moreover,  given an input formula without quantifiers, the QE algorithm produces a \textit{simplified formula}. For instance  (for more examples, see \cite{Brown2003}),
\begin{itemize}
	\item given an input formula $(ab \leqslant 0) \land (a+b = 0) \land (b^2+a^2>0 )\lor (a^2 = -b^2)$, the QE algorithm gives the equivalent simplified formula $a+b=0.$ 
\end{itemize}
On the other hand, given an input formula without unquantified free variables (usually called closed formula) is either \textit{true} or \textit{false}.\\

The symbolic computation of the \textit{Cylindrical Algebraic Decomposition} (CAD) introduced by Collins \cite{Collins1975} is the best currently known QE algorithm for solving real algebraic constraints (in particular parametric and \textit{non-convex} case) (see \cite{Strzebonski2006}). This method gives us \textit{an exact solution}, a simplified formula describing the semi-algebraic set. 

The QE methods, particularly the CAD, have already been used in various aspects of control theory (see \cite{Ratschan2012, Anai2014} and the references therein): robust control design, finding the feasible regions of a PID controller, the Hurwitz and Schur stability regions, reachability analysis of nonlinear systems,  trajectory generation \cite{Lindemann2006}.

\begin{remark} (On the complexity)
	Unfortunately the above method rapidly becomes slow due to its double exponential complexity \cite{Magron2015}. Its efficiency strongly depends on the number and on the complexity of the variables (control points) used for a given problem. The computational complexity of the CAD is double exponential \ie bounded by $(sd)^{2 \mathcal{O}(n)}$ for a finite set of $s$ polynomials in $n$ variables, of degree $d$. There are more computationally efficient QE methods than the CAD, like the \textit{Critical Point Method} \cite{Basu2014} (it has single exponential complexity in $n$ the number of variables) and the cylindrical algebraic sub-decompositions \cite{Wilson} but to the author knowledge there are no available implementations.
	
\end{remark}

For more complex systems, the exact or symbolic methods are too computationally expensive. There exist methods that are numerical rather than exact.\\

\textit{As a second alternative method}, we review one such method based on approximation of the exact set with more reasonable computational cost. The second method known as the Polynomial Superlevel Set (PSS) method, based on the paper \cite{Dabbene2017} instead of giving us exact solutions tries to approximate the set of solutions by minimizing the $L^1$ norm of the polynomial. It can deal with more complex problems.\\

\subsection{Cylindrical Algebraic Decomposition} \label{SubsecCAD}

In this section, we give a simple introduction to the Cylindrical Algebraic Decomposition.
\paragraph{Input of CAD:} 
As an input of the CAD algorithm, we define a set of polynomial equations and/or inequations in $n$ unknown symbolic variables (in our case, the control points) defined over real interval domains. 

\paragraph{Definition of the CAD:}
The idea is to develop a sequence of projections that drops the dimension of the semi-algebraic set by one each time.
Given a set $S$ of polynomials in $R^n$, a \textit{cylindrical algebraic decomposition} is a decomposition of $R^n$ into finitely many connected semialgebraic sets called \textit{cells}, on which each polynomial has constant sign, either $+$, $-$ or $0$. To be cylindrical, this decomposition must satisfy the following condition: If $1 \leqslant k < n$ and $\pi$ is the projection from $R^n$ onto $R^{n-k}$ consisting in removing the $k$ last coordinates, then for every pair of cells $c$ and $d$, one has either $\pi(c) = \pi(d)$ or $\pi(c) \cap \pi(d) = \emptyset$. This implies that the images by $\pi$ of the cells define a cylindrical decomposition of $R^{n-k}$.

\paragraph{Output of CAD:}
As an output of this symbolic method, we obtain the total algebraic expressions that represent \textit{an equivalent simpler form} of our system. Ideally, we would like to obtain a parametrization of all the control points regions as a closed form solution. Finally, in the case where closed forms are computable for the solution of a problem, one advantage is to be able to overcome any optimization algorithm to solve the problem for a set of given parameters (numerical values), since only an evaluation of the closed form is then necessary.\\

The execution runtime and memory requirements of this method depend of the dimension of the problem to be solved because of the computational complexity. For the implementation part, we will use its \textit{Mathematica}  implementation\footnote{ see \url{https://reference.wolfram.com/language/ref/CylindricalDecomposition.html}} (developed by Adam Strzebonski). Other implementations of CAD are QEPCAD, Redlog, SyNRAC, Maple.

\begin{example}
From \cite{Kauers2011}, we present an example in which we want to find the regions of the parameters $(a,b) \in \Reals^2$  \textit{where} the following formula is true, not only answering if the formula is true or not.\\
Having as input 

	\begin{equation*}
	F = \left\lbrace (a,b) \in \Reals^2 : f_1 (a,b)= \sqrt{a^2 - b^2} + \sqrt{ab-b^2} -a>0, \quad f_2(a,b) =0<b<a \right\rbrace                                                  
	\end{equation*}

the corresponding CAD output is given by
	\begin{equation*}
	\left\lbrace  a>0 \land b< \dfrac{4}{5}a \right\rbrace 
	\end{equation*}

As we notice, given a system of equations and inequalities formed by the control points relationship as an input, the CAD returns a simpler system that is equivalent over the reals.

\end{example}

\subsection{Approximations of Semialgebraic Sets}
Here we present a method based on the paper \cite{Dabbene2017} that tries to approximate the set of solutions. Given a set $$\mathcal{K} = \lbrace x \in \Reals^n : g_i(x)\geqslant 0, i = 1, 2, \ldots, m\rbrace$$ which is compact, with non-empty interior and described by given real multivariable polynomials $g_i(x)$ and a compact set $\mathcal{B}  \supset \mathcal{K} $, we aim at determining a so-called \textit{polynomial superlevel set} (PSS)
$$U(p) =\lbrace x \in \mathcal{B} : p(x) \geqslant 1\rbrace$$
The set $\mathcal{B} $ is assumed to be an $n$-dimensional hyperrectangle.
The PSS can capture the main characteristics of $\mathcal{K} $ (it can be non convex and non connected) while having at the same time a simpler description than the original set. It consists in finding a polynomial $p$ of degree $d$ whose 1-superlevel set $\lbrace x \mid p(x) \geqslant 1 \rbrace$ contains a semialgebraic set $\mathcal{B} $ and has minimum volume.
Assuming that one is given a simple set $\mathcal{B} $ containing $\mathcal{K} $ and over which the integrals of polynomials can be efficiently computed, this method involves searching for a polynomial $p$ of degree $d$ which minimizes $\int_\mathcal{B}  p(x) dx$ while respecting the constraints $p(x)\geqslant 1$ on $\mathcal{K} $ and  $p(x)\geqslant 0$  on $\mathcal{B} $.
Note that the objective is linear in the coefficients of $p$ and that these last two nonnegativity conditions can be made computationally tractable by using the \textit{sum of squares relaxation}.
The complexity of the approximation depends on the degree $d$. The advantage of such a formulation lies in the fact that when the \textit{degree} of the polynomial $p$ increases, the objective value of the problem converges to the true volume of the set $\mathcal{K} $.

\begin{example}
	To better review the latter method, we illustrate it with an example for a two dimensional set given in \cite{Dabbene2017}. In order to compare the two presented methods, we also give its CAD solution.
	Having the following non-convex semi-algebraic set:
	$$\mathcal{K} = x \in \Reals^2: \begin{cases}
	f_1(x)=1+2x_2 \geqslant 0,\\
    	f_1(x)=2-4x_1-3x_2 \geqslant 0,\\
   	f_1(x)=10-28x_1-5x_2-24x_1 x_2 -18 x_2^2 \geqslant 0,\\
   	f_1(x)=1-x_2-8x_1^2-2x_1 x_2 -x_2^2-8x_1^2x_2-6x_1x_2^2\geqslant 0
	\end{cases}$$
	with a bounding box $\mathcal{B}  = [-0.8,0.6] \times [-0.6, 1.0]$,  and setting $d=8$ , the degree of the polynomial $p(x)$.
	The algorithm yields the feasible region represented in Figure \ref{fig_RegHenrion}.
	
	\begin{figure}[h]
		\begin{subfigure}[b]{0.5\textwidth}
		\includegraphics[width=\textwidth]{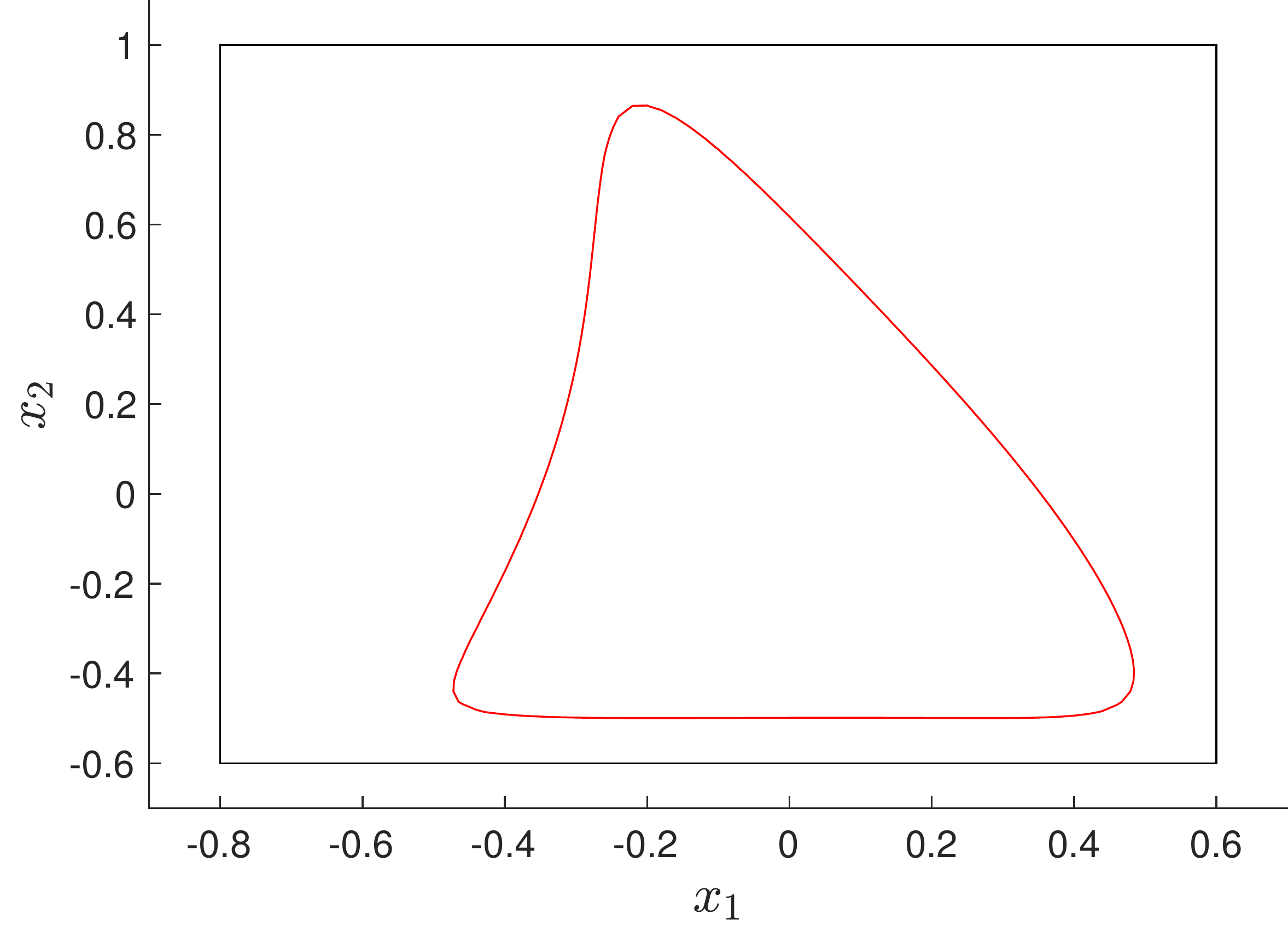}
\caption{Inner Polynomial Superlevel Set approximation of $8^\text{th}$-degree of the region $\mathcal{K}$ (the inner surface of the red line).The black rectangle represents the bounding box.}\label{fig_RegHenrion}
		\end{subfigure}
		\begin{subfigure}[b]{0.5\textwidth}
			\includegraphics[width=\textwidth]{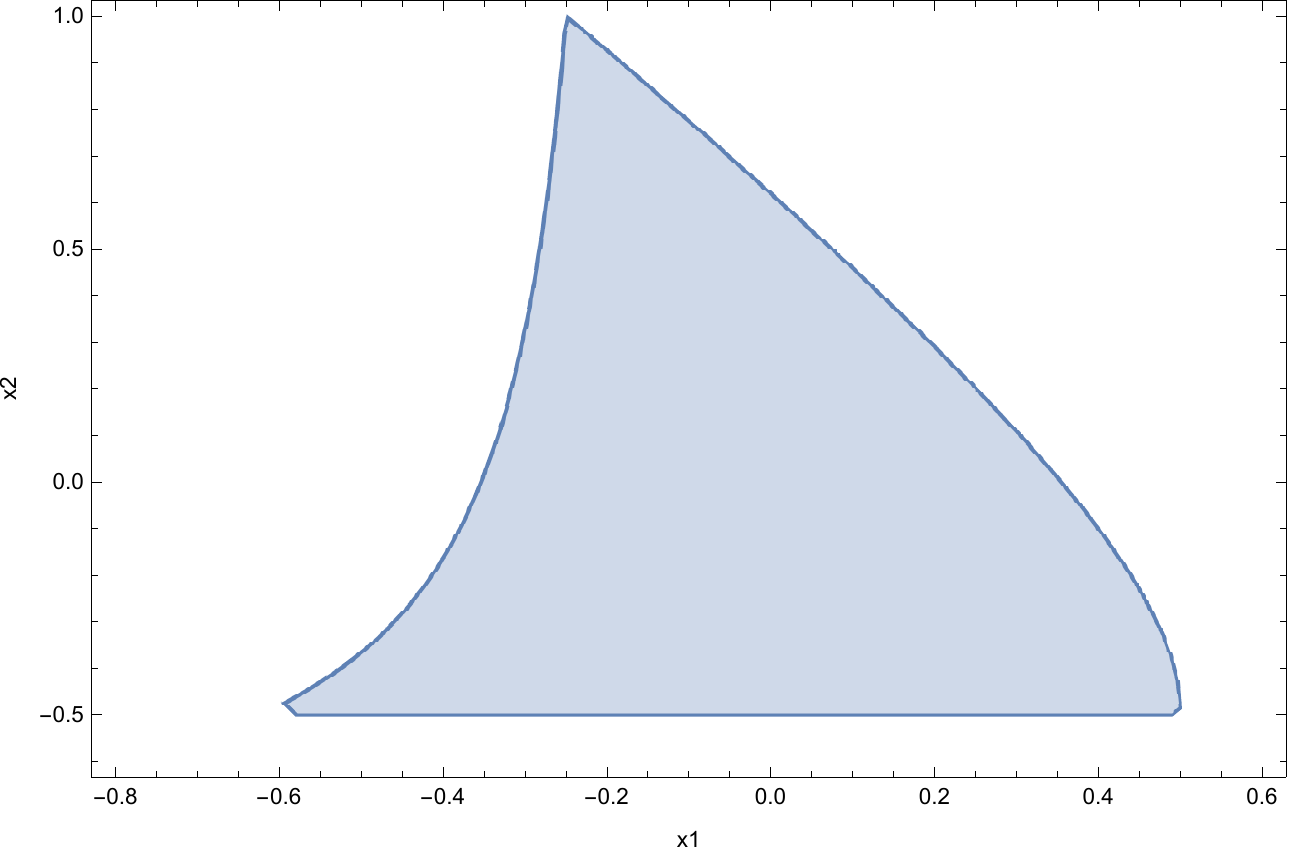}
			\caption{The region found by the CAD algorithm (the inner surface of the blue line).}\label{fig_RegHenrionCAD}
		\end{subfigure}
		\caption{The feasible regions by the two methods}
	\end{figure}
	
For the same set, even without specifying a particular box, the CAD algoririthm finds the following explicit solution:
	\begin{equation*}
	\begin{split}
&  \left(x_1=-\frac{5}{8}\land x_2=-\frac{1}{2}\right) \\
&\lor\left(-\frac{5}{8}<x_1<-\frac{1}{6}\land -\frac{1}{2}\leqslant x_2\leqslant \frac{-8
		x_1^2-2 x_1-1}{2 (6 x_1+1)}-\frac{1}{2} \sqrt{\frac{64
			x_1^4-160 x_1^3-12 x_1^2+28 x_1+5}{(6
			x_1+1)^2}}\right)\\
		&\lor \left(x_1=-\frac{1}{6}\land -\frac{1}{2}\leqslant
	x_2\leqslant \frac{7}{8}\right)\\
	&\lor \left(-\frac{1}{6}<x_1<\frac{1}{2}\land
	-\frac{1}{2}\leqslant x_2\leqslant \frac{-8 x_1^2-2 x_1-1}{2 (6
		x_1+1)}+\frac{1}{2} \sqrt{\frac{64 x_1^4-160 x_1^3-12 x_1^2+28
			x_1+5}{(6 x_1+1)^2}}\right)\\
		&\lor \left(x_1=\frac{1}{2}\land x_2=-\frac{1}{2}\right) 
	\end{split}
	\end{equation*}

As we can observe, the PSS method (Figure \ref{fig_RegHenrion}) gives us a good approximation of the feasible region, almost the same as the exact one obtained by the CAD algorithm (Figure \ref{fig_RegHenrionCAD}). However, in some cases, we observed that the PSS method may have some sensibilities when its bounding box is not well defined.
	
\end{example}

\section{Applications}\label{sec5App}
\subsection{Longitudinal dynamics of a vehicle}\label{secVehicle}
The constraints are essentials in the design of vehicle longitudinal control which aims to ensure the passenger comfort, safety and fuel/energy reduction. The longitudinal control can be designed for a highway scenario or a city scenario. In the first scenario, the vehicle velocity keeps a constant form where the main objective is the vehicle inter-distance while the second one, deals with frequent stops and accelerations, the so-called Stop-and-Go scenario \cite{Villagra2008}. The inter-distance dynamics can be represented as an single integrator driven by the difference between the leader vehicle velocity $V_l$ and the follower vehicle velocity $V_x$ , i.e., $\dot d = V_l - V_x $.\\
In this example, suppose we want to follow the leader vehicle, and stay within a fixed distance from it (measuring the distance through a camera/radar system). Additionally, suppose we enter a desired destination through a GPS system, and suppose our GPS map contains all the speed information limits. Our goal is the follower longitudinal speed $V_x$ to follow a reference speed $V_{xr}(t) \in [0, \min(V_l, V_{\max})], V_{\max} \in \Reals >0$  given by the minimum between the leader vehicle  speed and the speed limit.

The longitudinal dynamics of a follower vehicle is given by the following model:
\begin{equation}\label{ModelVehicle}
M \dot V_x(t) = \dfrac{u(t)}{r} -C_a V_x^2 (t)
\end{equation}
where $V_x$ is the longitudinal speed of the vehicle, $u$ is the motor torque, taken as \textit{control input} and the physical constants: $M$ the vehicle's mass, $r$ the mean wheel radius, and $C_a$ the aerodynamic coefficient.\\
The model is differentially flat, with $V_x$ as a flat output.
An open loop control yielding the tracking of the reference trajectory $V_{xr}$ by $V_x$, assuming the model to be perfect, is
\begin{equation}\label{ModelVehicleFlatness}
u_r (t) = r \left(  M  \dot V_{xr}(t) + C_a V_{xr}^2 (t)\right) 
\end{equation}
If we desire an open-loop trajectory $u_r \in  C^{0}$, then for the flat output, we should assign a Bézier curve of degree $d>1$.
We take $V_{xr}$ as reference trajectory, a Bézier curve of degree 4 \ie $C^{4}$-function.
\begin{align*}
V_{xr} (t) &= \sum_{i=0}^{4} a_i B_{i,4} (t), \\
V_{xr} (t_0) &= V_i, \quad V_{xr} (t_f) = V_f
\end{align*}
where the $a_i$'s are the control points and the $B_{i,4}$ the Bernstein polynomials.\\
Using the Bézier curve properties, we can find the control points of the open-loop control $u_r$ in terms of the $a_i$'s by the following steps:

\begin{enumerate}
	\item First, we find the control points  $a_i^{(1)}$ for $\dot V_{xr}$  by using the Equation \eqref{eqControlPointsDerivative}: 
	\begin{equation*}
	\dot V_{xr}= \sum_{i=0}^{3} a_i^{(1)} B_{i,3} (t)
	\end{equation*}
	
	\item We obtain the term $V_{xr}^2$ by
	\begin{equation*}
	V_{xr}^2= \sum_{i=0}^{4} a_i B_{i,4} (t)\sum_{i=0}^{4} a_i B_{i,4} (t) = \sum_{i=0}^{8} p_i B_{i,8} (t)
	\end{equation*}
	which is a Bézier curve of degree $8$ and where the control points $p_i$ are computed by the multiplication operation (see Equation \eqref{eqBezierMultip}).
	
	\item We elevate the degree of the first term up to $8$ by using the Equation \eqref{eqDegElev} and then, we find the sum of the latter with the Bézier curve for  $V_{xr}^2$. We end up with $u_r$ as a Bézier curve of degree $8$ with nine control points $U_i$:
	\begin{equation*}
	u_r (t) =  rM \dot V_{xr} + rC_a V_{xr}^2 = rM \sum_{i=0}^{3} a_i B_{i,3} (t) + rC_a (\sum_{i=0}^{4} a_i B_{i,4} )^2 = \sum_{i=0}^{8} U_i B_{i,8} (t)
	\end{equation*}
	with $U_i =r_k(a_0, \ldots,a_4)$.
\end{enumerate}

\subsubsection{Symbolic input constraints}
We want the input control points $U_i$ to be 
\begin{equation}
U_\text{min} < U_i < U_\text{max} \quad i = 0, \ldots, 8 \label{eqExampleInputs}
\end{equation}
where  $U_\text{min} = 0$ is the lower input constraint and $U_\text{max}= 10$ is the high input constraint. By limiting the control input, we indirectly constraint the fuel consumption.
The initial and final trajectory control points are defined as
$V_{x}(t_0) = a_0 = 0$ and $ V_{x}(t_1) = a_4 = 1$ respectively. 

The constraint \eqref{eqExampleInputs} directly corresponds to the semi-algebraic set:
The constraint \eqref{eqExampleInputs} corresponds to the \textit{semi-algebraic set} \ie the following system of nonlinear inequalities:
\begin{align}\label{Eq_System}
\begin{cases}
0<U_0= 4\, a_1<10 \\
0< U_1=a_1 + \frac{3\, a_2}{2} <10\\
0< U_2=\frac{4\, a_1^2}{7} - \frac{5\, a_1}{7} + \frac{12\, a_2}{7} + \frac{3\, a_3}{7} < 10\\
0< U_3=\frac{15\, a_2}{14} - \frac{10\, a_1}{7} + a_3 + \frac{6\, a_1\, a_2}{7} + \frac{1}{14} <10 \\
0< U_4=\frac{18\, a_2^2}{35} - \frac{10\, a_1}{7} + \frac{10\, a_3}{7} + \frac{16\, a_1\, a_3}{35} + \frac{2}{7}<10\\
0< U_5=\frac{10\, a_3}{7} - \frac{15\, a_2}{14} - \frac{6\, a_1}{7} + \frac{6\, a_2\, a_3}{7} + \frac{5}{7}<10\\
0< U_6=\frac{4\, a_3^2}{7} + \frac{5\, a_3}{7} - \frac{3\, a_1}{7} - \frac{9\, a_2}{7} + \frac{10}{7}  <10\\
0<U_7=\frac{5}{2} - \frac{3\, a_2}{2} <10\\
0< U_8=5 - 4\, a_3 < 10
\end{cases}
\end{align}
In order to solve symbolically the system of inequalities \ie to find the regions of the intermediate control points $a_i$, we use the Mathematica function \textit{CylidricalDecomposition}. The complete symbolic solution with three intemediate control points $(a_1, a_2,a_3)$ is too long to be included.  Since the latter is too long to be included, we illustrate the symbolic solution in the case of two intermediate control points $(a_1, a_2)$ :
\begin{equation*}
\begin{split}
&(0<a_1\leqslant 0.115563\land - a_1<a_2<1.33333) \\
&\lor \left(0.115563<a_1\leqslant 0.376808\land 0.142857 \left(-3. a_1^2+2. a_1-1\right)<a_2<1.33333\right) \\
&\lor \left(0.376808<a_1\leqslant 1.52983\land \frac{4 a_1-2}{3. a_1+4}<a_2<1.33333\right) \\
&\lor \left(1.52983<a_1<2\land 0.333333 \sqrt{15.
	a_1-17}-0.333333<a_2<1.33333\right)
\end{split}
\end{equation*}

The latter solution describing the feasible set of trajectories can be used to make a choice for the Bézier control points: "First choose $a_1$ in the interval $\left( 0, 0.115563\right] $ and then you may choose $a_2$ bigger than the chosen $-a_1$ and smaller than $1.33333$. Or otherwise choose $a_1$ in the interval $\left( 0.115563, 0.376808\right]$ and, then choose $a_2$ such that $0.142857 \left(-3a_1^2+2. a_1-1\right)<a_2<1.33333$, etc."\\

In Figure \ref{fig_VehicleRegion}, we illustrate the feasible regions for the three intermediate control points $(a_1, a_2,a_3)$ by using the Mathematica function \textit{RegionPlot3D}. We can observe how the flat outputs influences the control input \ie which part of the reference trajectory influences which part of the control input. For instance in \eqref{Eq_System}, we observe that the second control point $a_1$ influences more than $a_2$ and $a_3$ the beginning of the control input (the control points $U_0,U_1, U_2$). The previous inequalities can be used as a prior study to the sensibility of the control inputs with respect to the flat outputs.

\begin{figure}
	\begin{center}
		\includegraphics[width=3.5in]{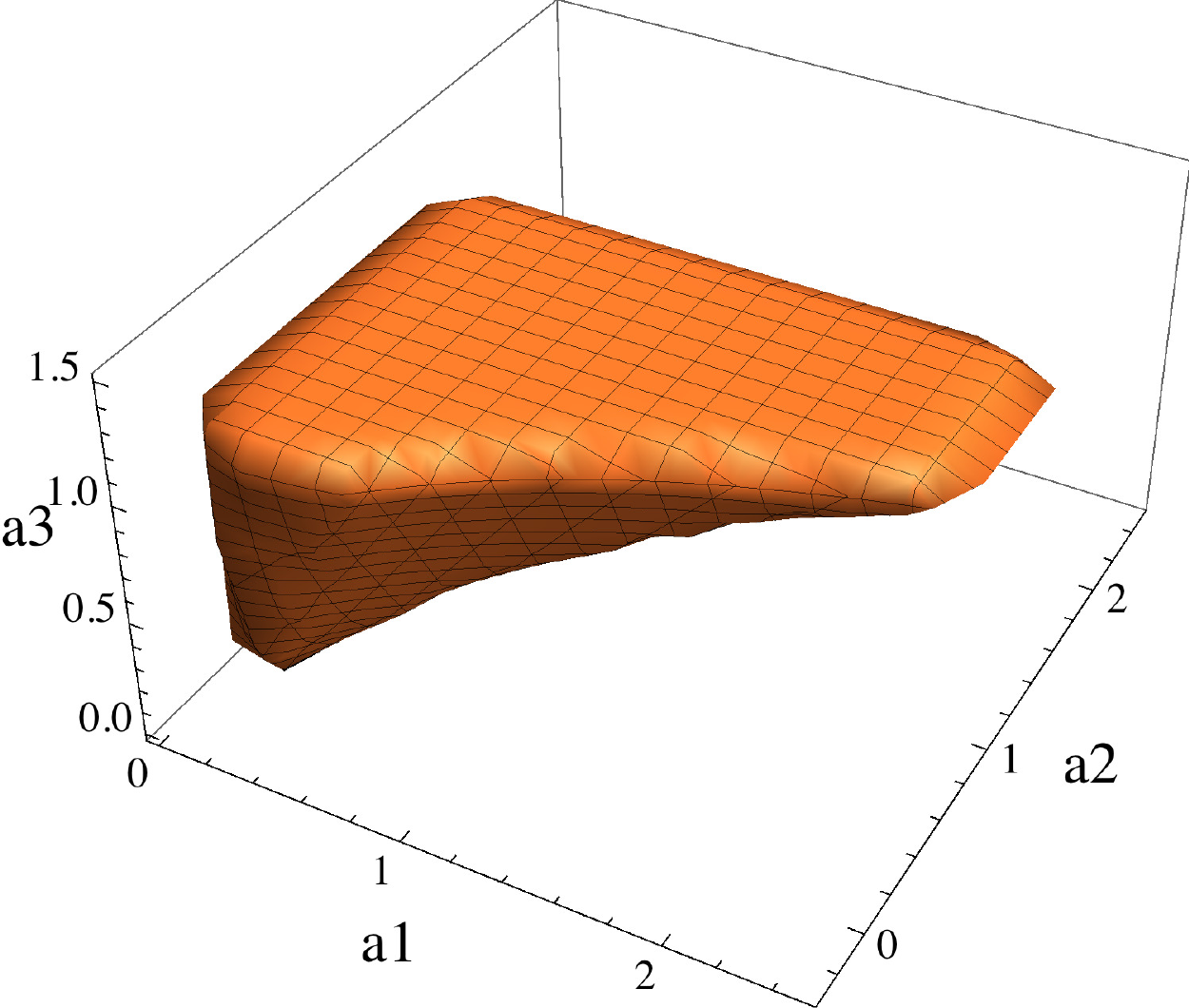}
		\caption{Feasible region for the control points of $V_{xr}$ when $U_\text{min} = 0$ and $U_\text{max}= 10$.}\label{fig_VehicleRegion}
	\end{center}
\end{figure} 

It should be stressed that the goal here is quite different than the traditional one in optimisation problems. We do not search for the best trajectory according to a certain criterion under the some constraints, but we wish to obtain the set of all trajectories fulfilling the constraints; this for an end user to be able to pick one or another trajectory in the set and to switch from one to another in the same set. The picking and switching operations aim to be really fast.

\subsubsection{Simulation results}
The proposed control approach has been successfully tested in simulation. For the physical parameters of the vehicle, academic values are chosen to test the constraint fulfilment.
For the design of the Bézier reference trajectory, we pick values for $a_1, a_2$ and $a_3$ in the constrained region. As trajectory control points for $V_{xr}$, we take the possible feasible choice $a_0= 0, a_1= 2, a_2=2.3, a_3= 1.3, a_4=1$.
Simulation results for the constrained open-loop input are shown in Figure \ref{figOpenU}.
\begin{figure}
	\begin{center}
		\includegraphics[width=3in]{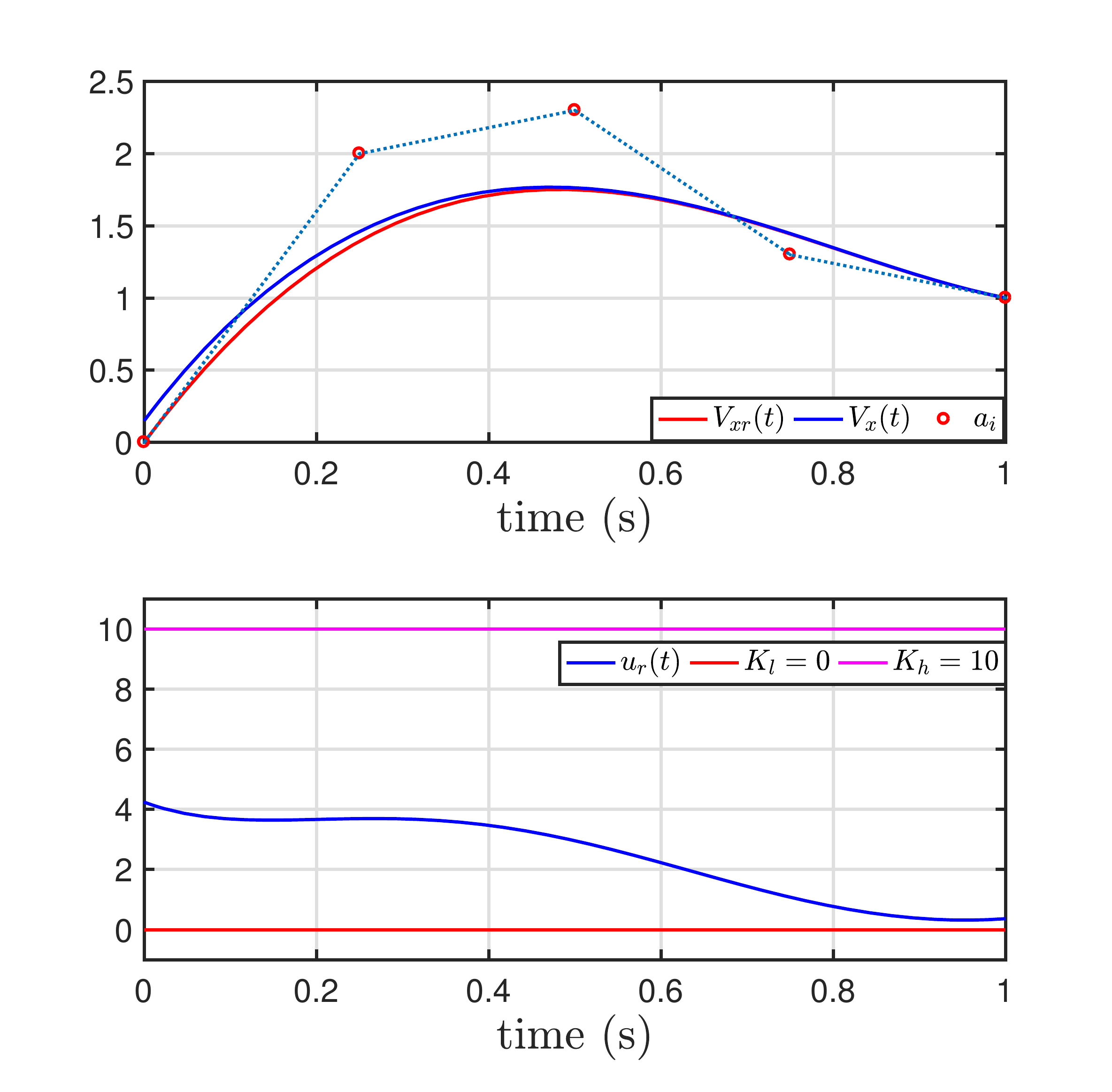}
		\caption{Open-loop input control }\label{figOpenU}
	\end{center}
\end{figure} 

The form of the closed-loop input is
\begin{equation}
u = Mr \left( \dot V_{xr} - \lambda(V_{x}-V_{xr}) \right) + r C_a V_{x}^{2}
\end{equation}
where $\lambda =9 $ is the proportional feedback gain chosen to make the error dynamics stable.
Figure \ref{FigClosedVx} shows the performance of the closed-loop control.
\begin{figure}
	\begin{center}
		\includegraphics[width=3.5in]{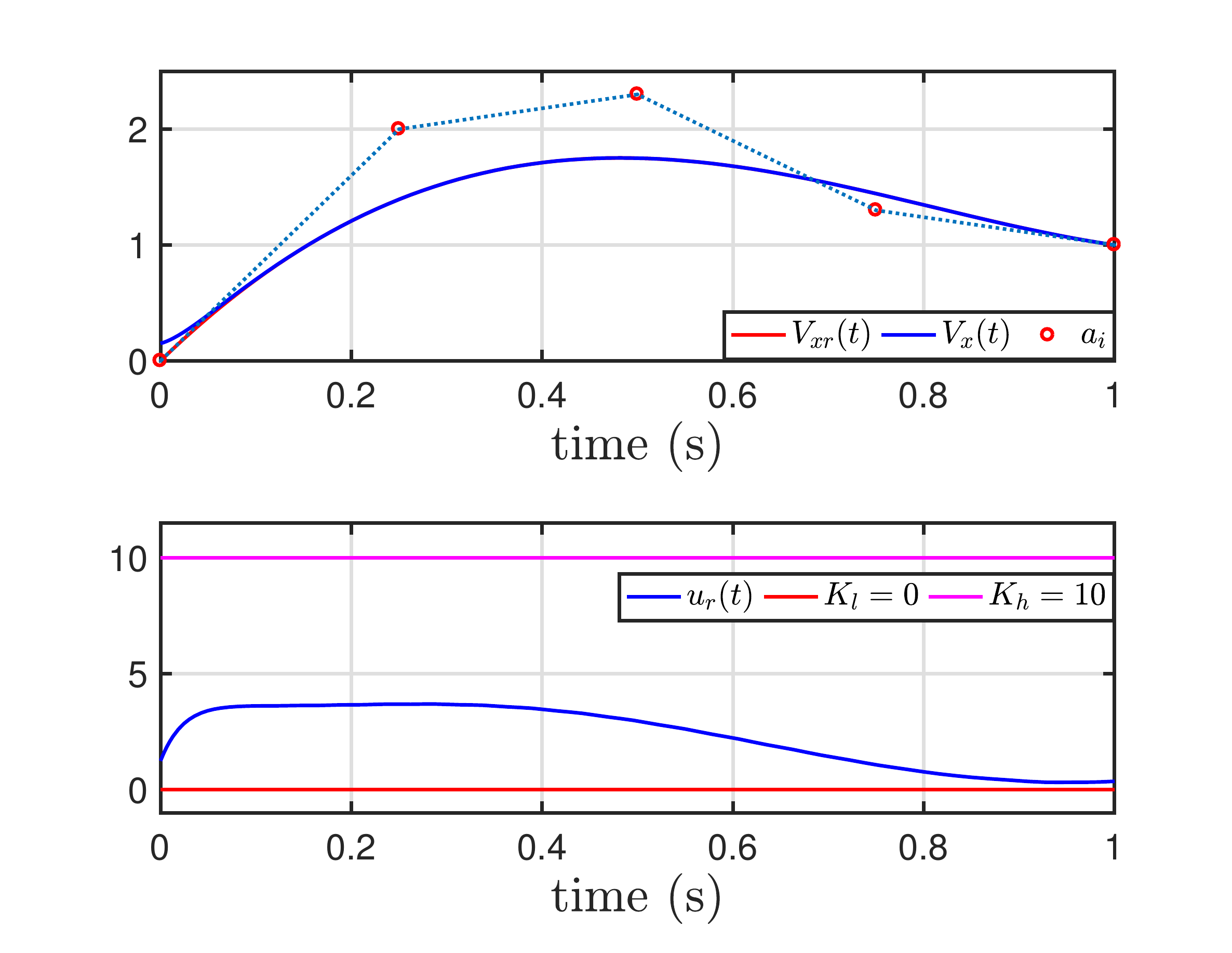}
		\caption{Closed-loop performance of trajectory tracking }\label{FigClosedVx}
	\end{center}
\end{figure} 
For both schemes, the input respects the limits.

\begin{figure}
	\begin{center}
		\includegraphics[width=3.5in]{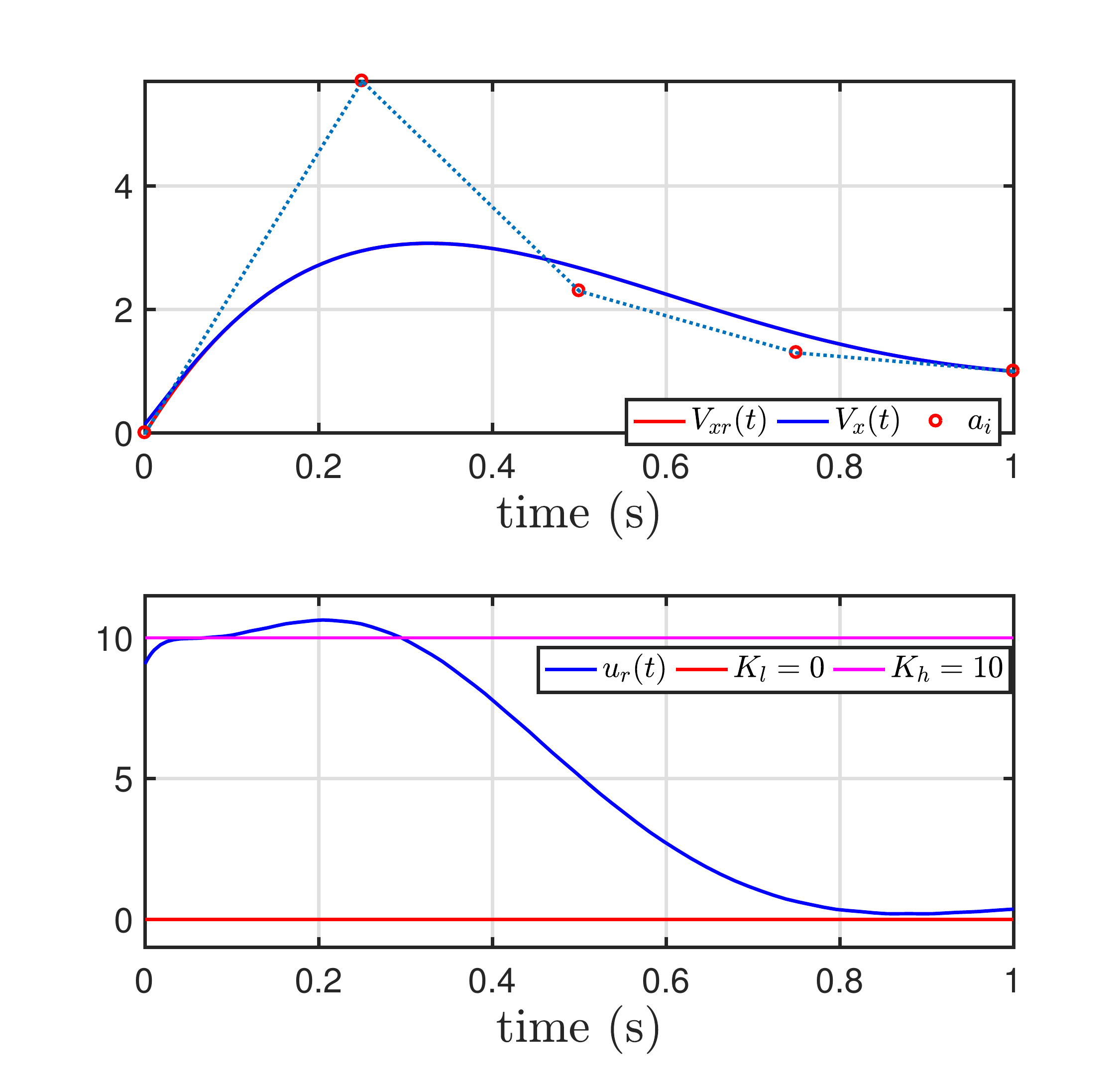}
		\caption{When control point $a_1$ is out of the its region }\label{FigClosedVxA}
	\end{center}
\end{figure} 
As shown in Figure \ref{FigClosedVxA}, choosing a control point outside of the suitable region ($a_1=5.5$) can violate the closed-loop input limits.

\subsection{Quadrotor dynamics}\label{sec2Quad}

\subsubsection{Motivation}

Over the last decade, the quadrotors have been a subject of extensive research study and have been used in a wide range of industrial and commercial applications. The quadrotors have become so popular due to their agility that allows them to hover as well as takeoff and land vertically while still being able to perform \textit{agressive trajectories} \footnote{A trajectory is considered as an aggressive one if during its tracking, one of the quadrotor motors is close to a saturation.}.

However, during aggressive trajectory design, it is difficult to ensure trajectory feasibility while trying to exploit the entire range of feasible motor inputs. Moreover, in many applications, their role is to fly in complex cluttered environments, hence there is a necessity of output constraints.
Therefore, the constraints on the inputs and states are one of the crucial issues in the control of quadrotors. 

Fortunately, with the hardware progress, today the quadrotors have speed limits of forty meters per second and more comparing to few meters per second in the past \cite{Faessler2018}. Therefore, it is important to conceive control laws for quadrotors to a level where they can exploit their full potential especially in terms of agility.

In the famous paper \cite{Mellinger2011}, is proposed an algorithm that generates optimal trajectories such that they minimize cost functionals that are derived from the square of the norm of the snap (the fourth derivative of position).
There is a limited research investigating the quadrotor constraints (see \cite{Cao2016} and the papers therein) without employing an online optimisation.

The following application on quadrotor is devoted to unify the dynamics constraints or demands constraints with the environmental constraints (\eg, fixed obstacles).

\subsubsection{Simplified model of quadrotor}

A (highly) simplified nonlinear model of quadrotor is given by the equations:
\begin{subequations}
\label{nonlinearNanoQuadModelBis}
\begin{align}
  \label{nonlinearNanoQuadModelBisOne}
  m \ddot x &= \theta u_1 \\ 
  \label{nonlinearNanoQuadModelBisTwo}
  m \ddot y &= -\phi u_1 \\
  \label{nonlinearNanoQuadModelBisThree}
  m \ddot z &= -mg + u_1 \\
  \label{nonlinearNanoQuadModelBisFour}
  I_x \ddot \theta &= u_2 \\
  \label{nonlinearNanoQuadModelBisFive}
  I_y \ddot \phi &= u_3 \\
  \label{nonlinearNanoQuadModelBisSix}
  I_z \ddot \psi &= u_4
\end{align} 
\end{subequations}
where $x$, $y$ and $z$ are the position coordinates of the quadrotor in the world frame, and $\theta$, $\phi$ and $\psi$ are the pitch, roll and yaw rotation angles respectively. The constant $m$ is the mass, $g$ is the gravitation acceleration and $I_x, I_y, I_z$ are the moments of inertia along the $y$, $x$ directions respectively. The thrust $u_1$ is the total lift generated by the four propellers applied in the $z$ direction, and $u_2, u_3$ and $u_4$ are the torques in $\theta, \phi$ and $\psi$ directions respectively. As we can notice, the quadrotor is an under-actuated system \ie it has six degrees of freedom but only four inputs. 

A more complete presentation of the quadrotor model can be found in the Section \ref{ch:6-AppQuadrotor}.

\subsubsection{Differential flatness of the quadrotor}
Here, we describe the quadrotor differential parametrization on which its offline reference trajectory planning procedure is based.
The model  \eqref{nonlinearNanoQuadModelBis} is differentially flat. Having four inputs for the quadrotor system, the flat output has four components. These are given by the vector:
\begin{equation*}
F =(x, y, z, \psi).
\end{equation*}
By equation \eqref{nonlinearNanoQuadModelBisThree}, we easily obtain expression of the thrust reference $u_{1r}$ 
\begin{align}
u_{1r} &= m ( \ddot z_r + g )
\end{align}
Then, by replacing the thrust expression in  \eqref{nonlinearNanoQuadModelBisOne}--\eqref{nonlinearNanoQuadModelBisTwo}, we obtain the angles  $\theta_r$ and $\phi_r$ given by 
\begin{subequations}
	\begin{align}\label{nonlinearThetaR}
	\theta_r &= \dfrac{m\ddot x_r}{u_{1r}} = \dfrac{\ddot x_r}{\ddot z_r + g} \\ 
	\label{nonlinearPhiR}
	\phi_r &= \dfrac{-m\ddot y_r}{u_{1r}} = \dfrac{-\ddot y_r}{\ddot z_r + g}
	\end{align} 
\end{subequations}
We then differentiate \eqref{nonlinearThetaR},
\eqref{nonlinearPhiR} and $\psi_r$ twice to obtain \eqref{nonlinearNanoQuadModelBisFour}--\eqref{nonlinearNanoQuadModelBisSix} respectively. This operation gives us $u_2$ , $u_3$ and $u_4$.
\begin{equation}\label{nonlinearNanoQuadModelParametrizationU2}
u_{2r} =    I_x \ddot \theta_r=\dfrac{I_x}{(g + \ddot z_r)}\,\left(
x_r^{(4)} - 2\, \dfrac{ x_r^{(3)} (\ddot z_r + g) - \ddot x_r z_r^{(3)})}{(\ddot z_r + g)^2}\,  
z_r^{(3)} - \dfrac{\ddot x_r z_r^{(4)}}{\ddot z_r + g}
\right),
\end{equation}

\begin{equation}\label{nonlinearNanoQuadModelParametrizationU3}
u_{3r} =    I_y \ddot \phi_r =\dfrac{I_y}{(g+ \ddot z_r)}\,\left(
-y_r^{(4)} + 2\, \dfrac{ y_r^{(3)} (\ddot z_r + g) - \ddot y_r  z_r^{(3)})}{(\ddot z_r + g)^2}\, z_r^{(3)} + 
\dfrac{ \ddot y_r z_r^{(4)}}{\ddot z_r + g}
\right),
\end{equation}
and
\begin{equation}\label{nonlinearNanoQuadModelParametrizationU4}
u_{4r} = I_z \ddot \psi_r.
\end{equation} 
A more complete model of a quadrotor and its flatness parametrization can be found in \cite{Sira-Ramirez2011} and \cite{Flores2006}.

\subsubsection{Constraints}

Given an initial position and yaw angle and a goal position and yaw angle of the quadrotor, we want to find a set of smooth reference trajectories while respecting the dynamics constraints and the environmental constraints.
Quadrotors have electric DC rotors that have limits in their rotational speeds, so input constraints are vital to avoid rotor damage. Besides the state and input constraints, to enable them to operate in constrained spaces, it is of great importance to impose output constraints.\\

We consider the following constraints:
\begin{enumerate}
	\item \textit{The thrust $u_1$} \\
	We set a maximum ascent or descending acceleration of 4g (g=9.8\,m/s$^2$), and hence the thrust constraint is defined as:
	\begin{equation}\label{eqThrustLim}
    0< u_1 \leqslant U_1^{\text{max}} = 4\,m\!\cdot\! g =20.79\,\text{N},
	\end{equation}
	where $m$ is the quadrotor mass which is set as 0.53\,kg in the simulation.
	By the latter constraint, we also avoid the singularity for a zero thrust.
	
	\item  \textit{The pitch and roll angle}  \\
	 In applications, the tilt angle is usually inferior to 14 degrees ($0.25$rad). We set
	\begin{align}\label{EqConstTheta}
	& |\phi  |\leqslant  \Phi^{\text{max}} = 0.25 \text{rad}\\
	 & |\theta | \leqslant \Theta^{\text{max}} = 0.25 \text{rad}
	\end{align}
	
	\item  \textit{The torques $u_2$, $u_3$ et $u_4$}\\
With a maximum tilt acceleration of 48\,rad/s$^2$, the limits of the control inputs are:
		\begin{align}
    & |u_2|,\,|u_3| \leqslant 48 I_{xx}= 0.3 \,\text{N$\cdot$m}\\
	&|u_4| \leqslant 48 I_{zz} =0.5 \,\text{N$\cdot$m}
	\end{align}
where $I_{xx}$, $I_{yy}$, $I_{zz}$ are the parameters of the moment of inertia, $I_{xx}\!=\!I_{yy}\!=\!6.22\!\times\!10^{-3} $kg$\,\cdot\,$m$^2$, $I_{zz}=1.12\!\times\!10^{-2}$kg$\,\cdot\,$m$^2$.
	
	\item \textit{Collision-free constraint}\\
To avoid obstacles, constraints on the output trajectory $x, y, z$ should be reconsidered.
\end{enumerate}


\paragraph{Scenario 1:} In this scenario, we want to impose constraints on the thrust, and on the roll and pitch angles.

\subsubsection{Constrained open-loop trajectory  $u_{1r}$}

We specialize the flat output $z_r$ to a sigmoid between two quasi constant altitudes, a situation frequently needed in practice:
\begin{equation}
z_r(t) = \dfrac{H_f -H_i}{2}\left( 1+ \tanh (\gamma (t - t_m))\right) + H_i
\end{equation}
where $H_i$ is the initial altitude and $H_f$ is the final altitude of the quadrotor; $\gamma$ is the slope parameter of the tanh and $t_m$ is the time when the quadrotor is taking off (see Figure \ref{FigTanh}). The maximum value for $z_r(t)$ is the final altitude $H_f$ (see fig. \ref{FigTanh}).

The easy numerical implementation of the derivatives of $z_r(t)$ is due to the nice recursion. Let $R= \tanh (\gamma (t - t_m))$ and $C=\dfrac{H_f -H_i}{2}$.  The first four derivatives of $z_r(t)$ are given as: 

\begin{align*}
\dot z_r &= \gamma C ( 1 - R^2)\\
\ddot z_r &= - 2 \gamma^2 C R( 1 - R^2)\\
z_r^{(3)} &= 2\gamma^3 C (1-R^2)(1-3 R^2)\\
z^{(4)} &= -8\gamma^4 C R(3 R^4 - 5R^2 + 2)
\end{align*}

\begin{figure}[h!]
	\begin{center}\hspace*{-15ex}
		\includegraphics[width=7in]{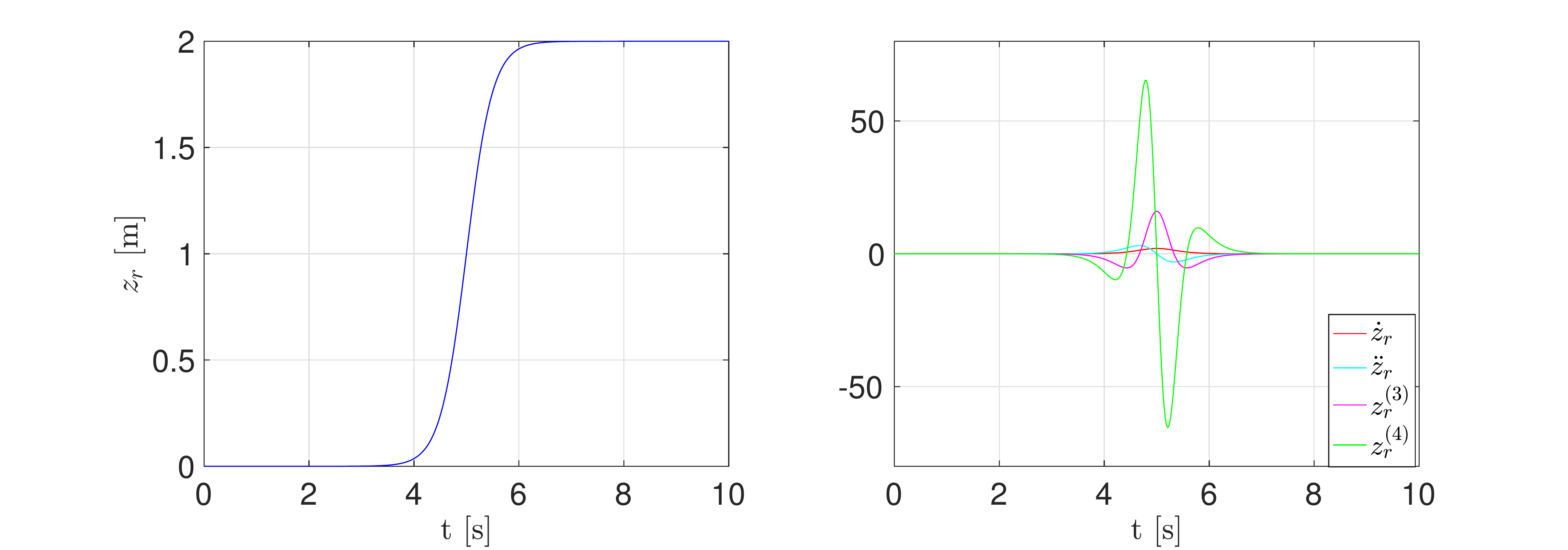}
		\caption{The reference trajectory for $z_r(t)$ (left) and its derivatives (right) with $H_i = 0$m and $H_f=2$m, $t_m = 5s$ and  parameter $\gamma =2$.}\label{FigTanh}
	\end{center}
\end{figure} 

The maximum values for its derivatives depend only on $\gamma$ and $C$, and their values can be determined.  We obtain their bounds as:
\begin{align*}
H_i \leqslant z_r &\leqslant H_f, \quad &\\
0 \leqslant \dot z_r & \leqslant b_1 \gamma C, \quad &b_1 = 1;\\
- b_2  \gamma^2 C \leqslant &\ddot z_r \leqslant b_2 \gamma^2 C, \quad  &b_2  = \dfrac{4\sqrt{3}}{9};\\
- \underline{b_3} \gamma^3  C\leqslant z_r^{(3)} &\leqslant\overline{b_3} \gamma^3 C, \quad & \underline{b_3}=\dfrac{2}{3}, \quad \overline{b_3} = 2;\\
-b_4 \gamma^4 C \leqslant z^{(4)} &\leqslant b_4 \gamma^4 C,  \quad &b_4 \approx 4.0849.
\end{align*}

Consequently, from the thrust limits \eqref{eqThrustLim},  we have the following inequality
$$
0< m(-b_2 \gamma^2 +g)  \leqslant u_{1r} = m( \ddot z_r +g) \leqslant m(b_2 \gamma^2 +g) < U_1^{\text{max}}.$$
The input constraint of $u_{1r}$ will be respected by choosing a suitable value of $\gamma$ and $C$ such that 
\begin{equation}\label{GammaInterval}
\gamma^2 C< \min\left\lbrace \dfrac{1}{b_2}\left( \dfrac{U_1^{\max}}{m}- g\right), \frac{g}{b_2}\right\rbrace.
\end{equation}

Figure \ref{FigConstrainedU1} depicts the constrained open-loop trajectory $u_{1r}$ that is well chosen by taking $\gamma =2$ and $H_f=2$m. On the other hand, in Figure \ref{FigOutConstrainedU1} is shown the violation of the thrust constraints when $\gamma =7$ is chosen out of the constrained interval \eqref{GammaInterval}.

\begin{figure}[h!]
	\begin{center}
		\includegraphics[width=3.5in]{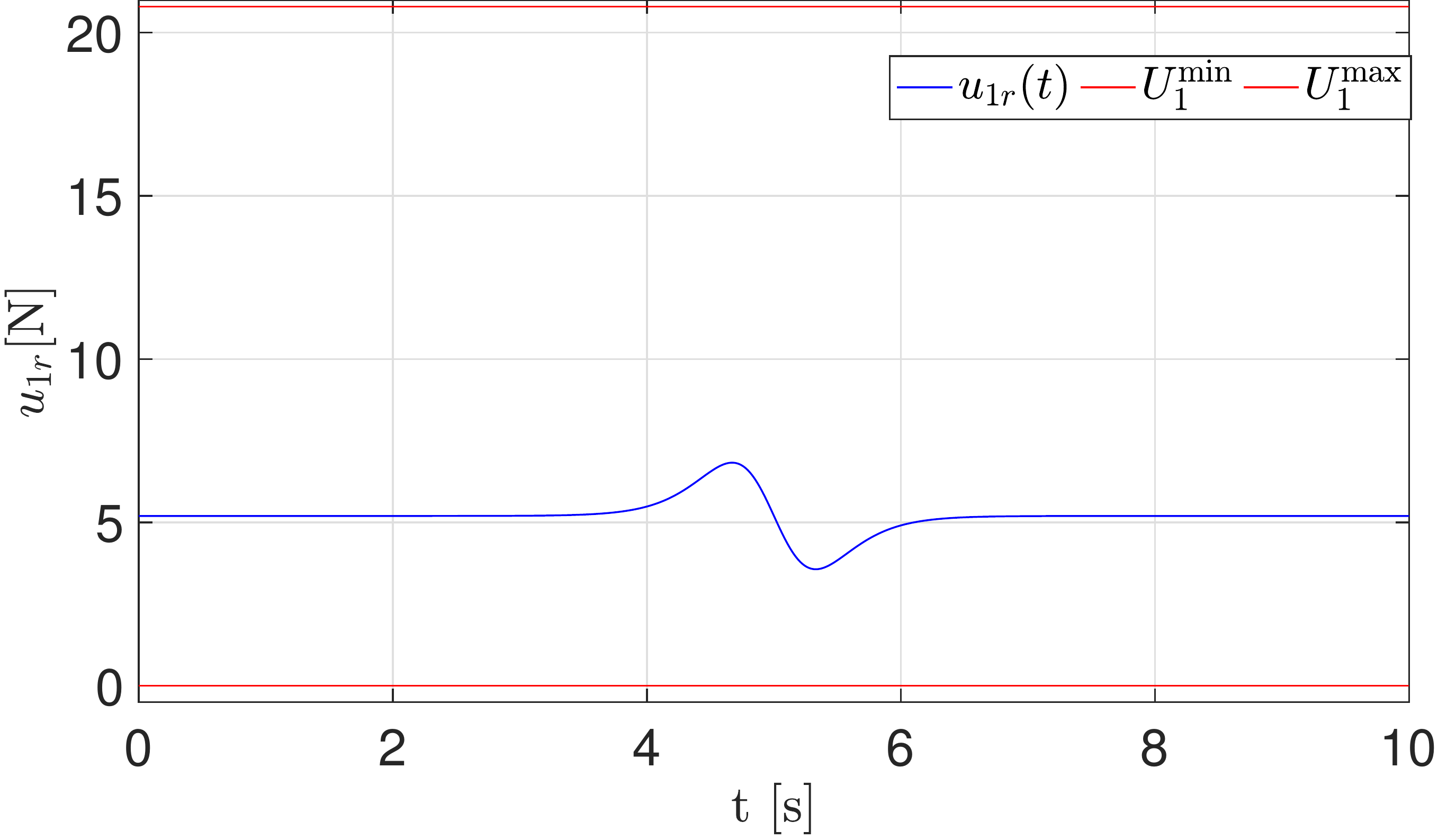}
		\caption{The reference trajectory for $u_{1r} (t)$ for a value of $\gamma =2$ and $H_f=2$m.}\label{FigConstrainedU1}
	\end{center}
\end{figure}

\begin{figure}[h!]
	\begin{center}
		\includegraphics[width=3.5in]{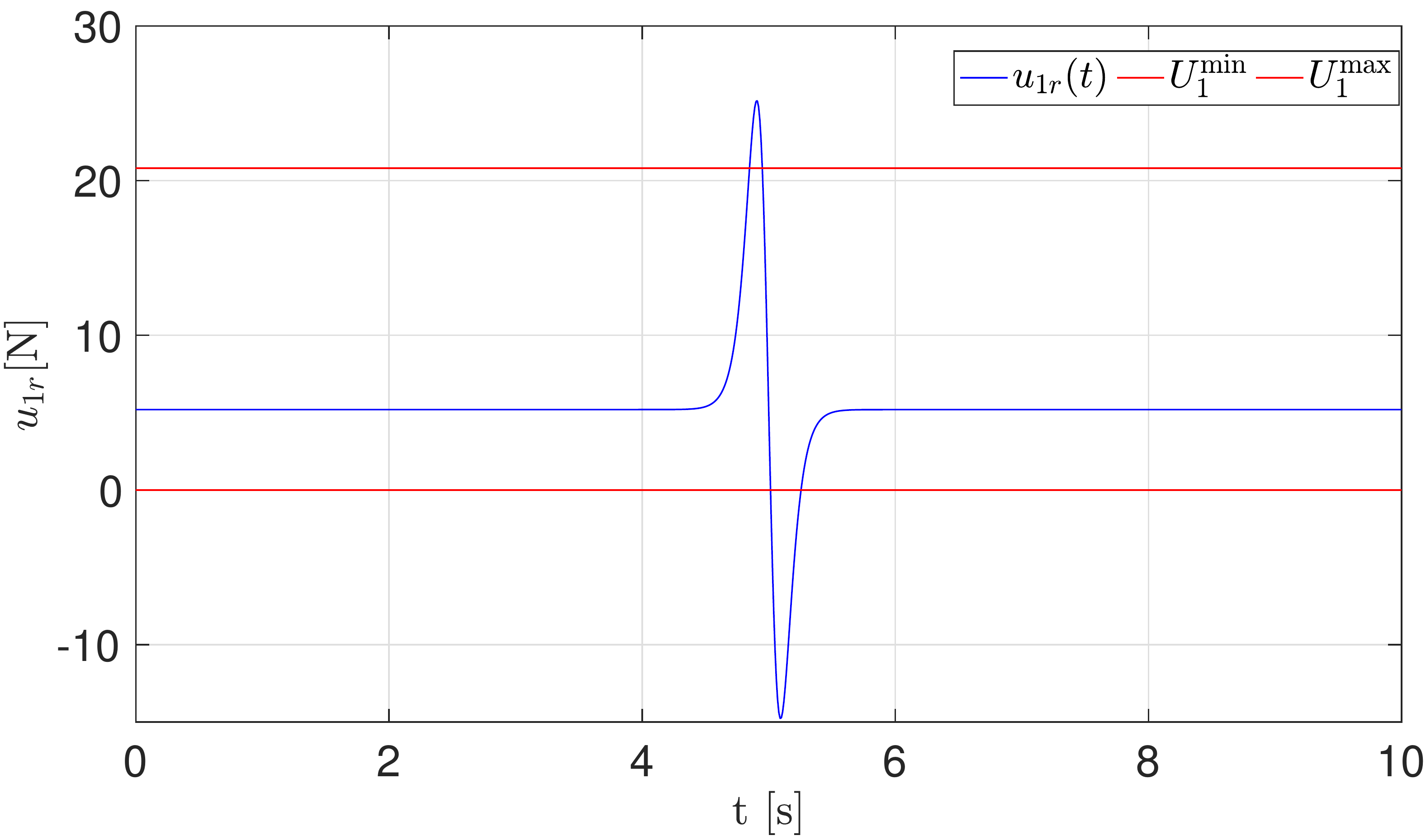}
		\caption{When the value for $\gamma$ is out of the defined interval, the constraints on the open-loop trajectory $u_{1r} (t)$ are not respected. The reference trajectory for $u_{1r} (t)$ for a value of $\gamma =7$. }\label{FigOutConstrainedU1}
	\end{center}
\end{figure}

\subsubsection{Constrained open-loop trajectories $\theta_{r}$ et $\phi_{r}$} \label{secU2R}
In the rest of the study, we omit the procedure for the angle $\phi_r$ since is the same as for the angle $\theta_r$.

\begin{enumerate}
	\item In the first attempt, the reference trajectory $x_r$ will be a Bézier curve of degree $d=6$ with a predefined control polygon form as:

$$	\Vect{A_{x}} = \left\lbrace a, a, a, \dfrac{a+b	}{2}, b, b, b \right\rbrace. $$

The aim of the first and the final control point repetitions is to fix the velocity and acceleration reference equilibrium points as : $ \dot x_r(t_0) =  \dot x_r(t_f) =0$ and $ \ddot x_r(t_0) =  \ddot x_r(t_f) =0.$

 The  control polygon of the velocity reference trajectory $\dot x $ is :
 $$	\Vect{A_{\dot x}} = \left\lbrace 0, 0, \dfrac{d}{T}\dfrac{b-a}{2}, \dfrac{d}{T}\dfrac{b-a}{2}, 0, 0 \right\rbrace.$$
 
  The  control polygon of the acceleration reference trajectory $\ddot x $ is :
 $$	\Vect{A_{\ddot x}} = \left\lbrace 0,\dfrac{d(d-1)}{T^2} \dfrac{a+b}{2}, 0, -\dfrac{d(d-1)}{T^2} \dfrac{a+b}{2}, 0 \right\rbrace. $$

The proposed form of Bézier curve provide us the explicit bounds of its second derivative $\ddot x_r $ when $a=0$ such that $ \ddot x_r^{\min}= -\frac{144}{25} \frac{b}{ T^2} $ and $\ddot x_r^{\max} =\frac{144}{25} \frac{b}{ T^2}$.

From the Equations \eqref{EqConstTheta} and \eqref{nonlinearThetaR}, we get
\begin{equation}
 \frac{-\frac{144}{25} \frac{b}{ T^2} }{b_2 \gamma^2 C+g} \leqslant \theta_{r} = \dfrac{\ddot x_r}{\ddot z_r + g} \leqslant  \frac{\frac{144}{25} \frac{b}{ T^2} }{-b_2 \gamma^2 C+g}
\end{equation}

\begin{figure}[h!]
	\begin{center}\hspace*{-15ex}
		\includegraphics[width=7in]{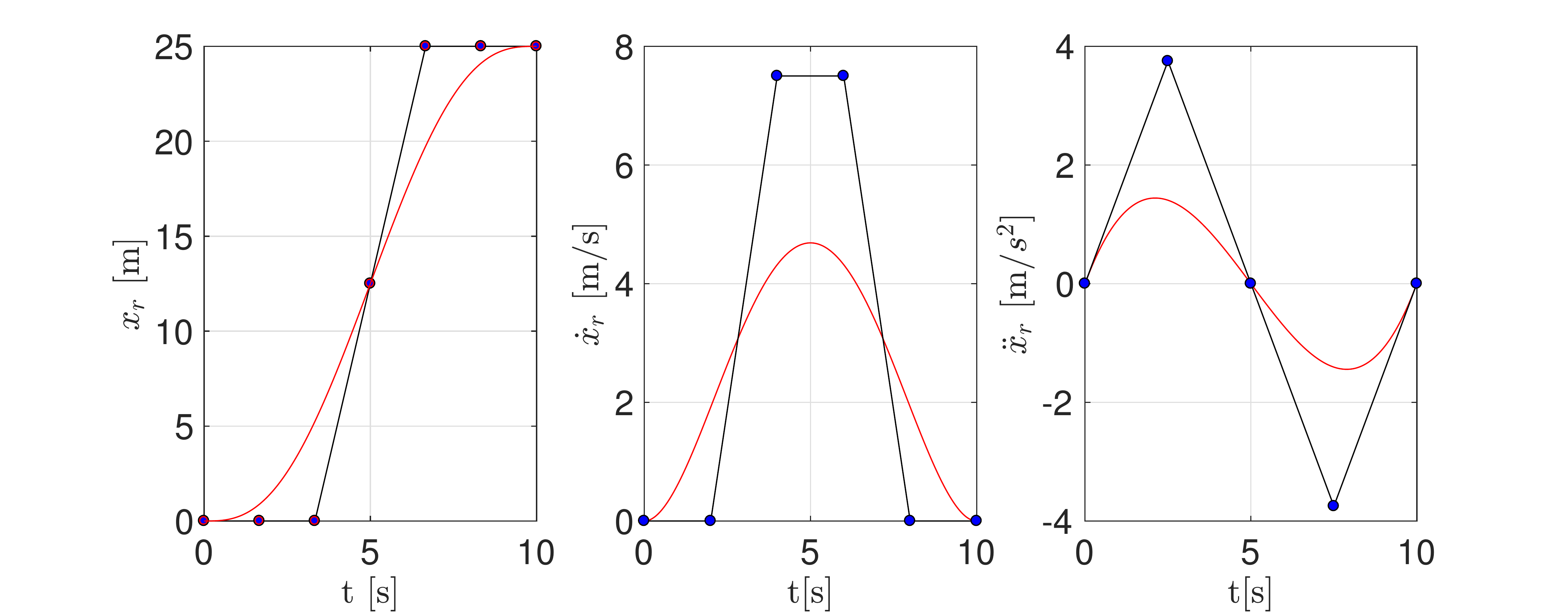}
		\caption{The Sigmoid Bézier trajectory $x_r$,  the velocity trajectory $\dot x_r$ and the acceleration trajectory $\ddot x_r$ with their respective control polygons when $a = 0$ and $b=25$. }\label{FigXcurve_caseA}
	\end{center}
\end{figure} 

\begin{figure}[h!]
	\begin{center}
		\includegraphics[width=5in]{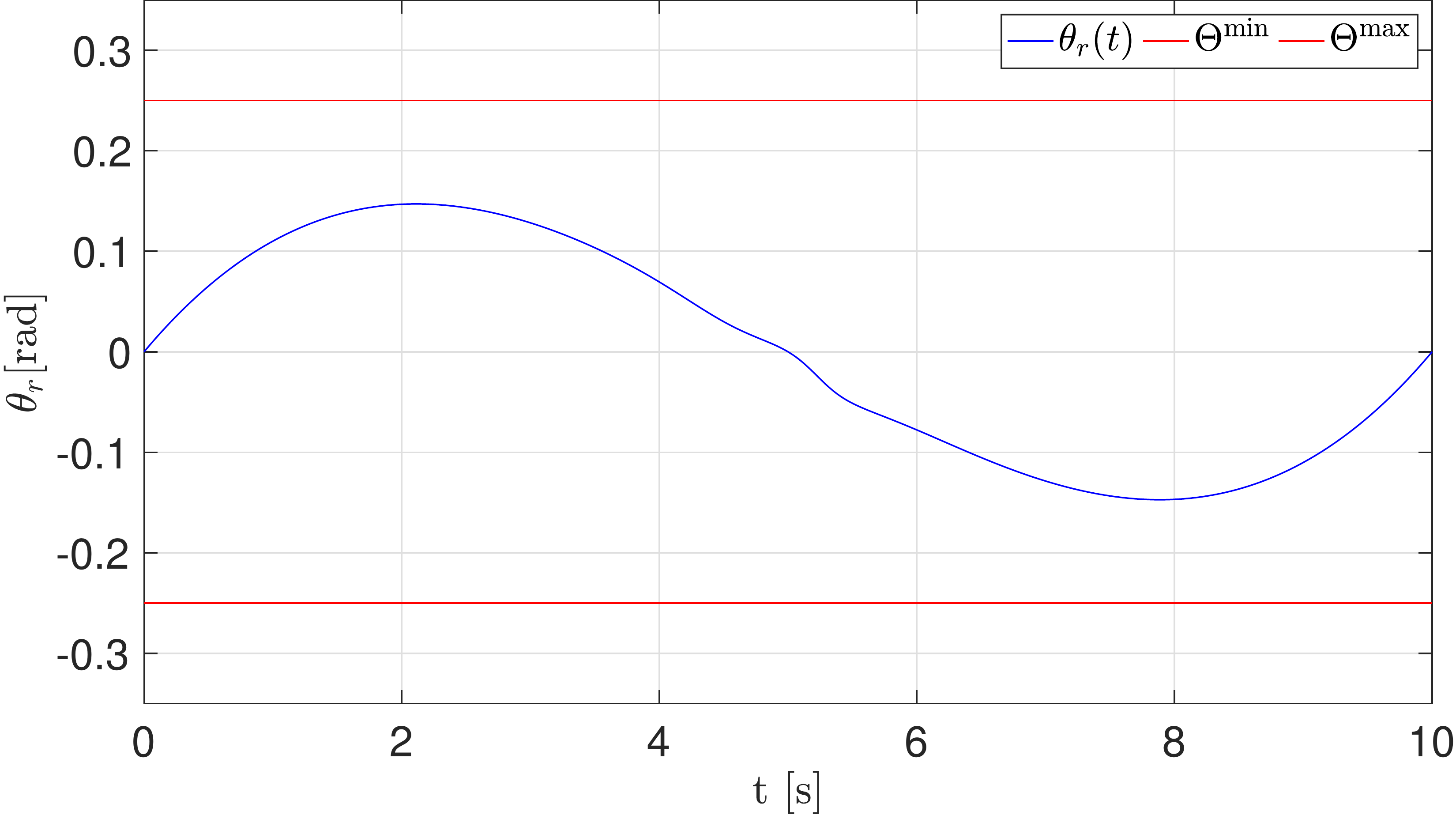}
		\caption{The open-loop trajectory $\theta_{r} (t)$  for Sigmoid Bézier trajectory}\label{FigConstrainedTheta_CaseA}
	\end{center}
\end{figure}

	\item In a second case, the reference trajectory $x_r$ can be any Bézier curve. However, we need to impose the first and last controls points in order to fix the initial and final equilibrium states. For the example, we take a Bézier trajectory of degree $d=8$ with control polygon defined as:
	
	$$	\Vect{A_{x}} = \left\lbrace a, a, a, \alpha_{1}, \alpha_{2}, \alpha_{3}, b, b, b \right\rbrace.$$
	
\end{enumerate}

When $\gamma =2$ and $H_i =0$m, $H_f =2$m are fixed, the minimum and maximum values for $\ddot z_r$ are also fixed. Therefore, to impose constraints on $\theta_r$, it remains to determine $\ddot x_r$, \ie the control points of $x_r$ 

\begin{align}\label{eqConstraintThetaX}
&\ddot x_r \leqslant  (-b_2 \gamma^2 C+g) \Theta^{\text{max}} = X^{\max} \approx 1.682 m/s^2,\\
 &\ddot x_r  \geqslant -(b_2 \gamma^2 C+g) \Theta^{\text{max}} =  X^{\min} \approx -3.222m/s^2.
\end{align}

The initial and final trajectory control points are defined as $x_{r}(t_0) = a= 0$ and $ x_{r}(t_f) = b = 2$ respectively.  
Therefore, for  $\ddot x_r$ where $T= t_f-t_0=10$, we obtain the following control polygon $\Vect{A_{ \ddot x}} = (a_{\ddot x i})_{i=0}^6$ :

 $$	\Vect{A_{\ddot x}} = \left\lbrace 0,\frac{14 \alpha_1}{25}, \frac{14 \alpha_2 - 28 \alpha_1}{25},  \frac{14 \alpha_1 - 28 \alpha_2 + 14 \alpha_3}{25}, \frac{14 \alpha_2 - 28 \alpha_3 +28}{25}, \frac{14 \alpha_3 - 28 }{25}, 0  \right\rbrace.$$

As explained in the previous section, to reduce the distance between the control polygon and the Bézier curve, we need to elevate the degree of the control polygon $\Vect{A_{ \ddot x}}$. We elevate the degree of $\Vect{A_{\ddot x}}$ up to $16$ and we obtain a new augmented control polygon $\Vect{A^{A}_{ \ddot x}}$ by using the operation \eqref{eqDegElev} (see Figure \ref{FigXcurve} (right)).

The equation \eqref{eqConstraintThetaX} translates into a system of linear inequalities \ie \textit{semi-algebraic set} defined as :
\begin{equation}
X^{\min} < a^{A}_{\ddot x i}  = f(\alpha_{1},\alpha_{2}, \alpha_{3})<X^{\max} \quad i = 0, \ldots,16.
\end{equation}

We illustrate the feasible regions for the control points by using the Mathematica function \textit{RegionPlot3D} (see Figure \ref{FigRegionQuad}).

\begin{figure}[h!]
	\begin{center}\hspace*{-15ex}
		\includegraphics[width=7in]{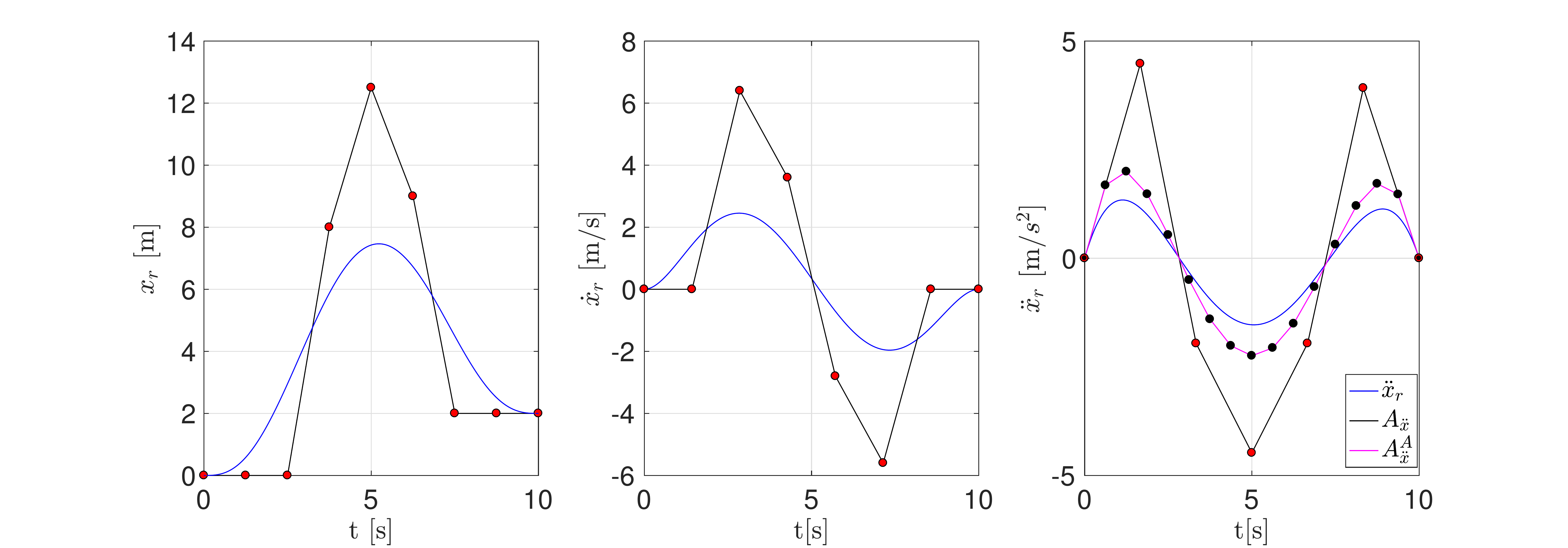}
		\caption{The Bézier curve $x_r$, $\dot x_r$, $\ddot x_r$ (blue lines) and their respective control polygons (black linear piecewise lines) with $a = 0$, $\alpha_{1} =8 $, $\alpha_{2} =12.5 $ , $\alpha_{3} = 9$ and $b=2$. The augmented control polygon for $\ddot x_2$ is represented by the magenta line.}\label{FigXcurve}
	\end{center}
\end{figure}

\begin{figure}[h!]
	\begin{center}
		\includegraphics[width=3.5in]{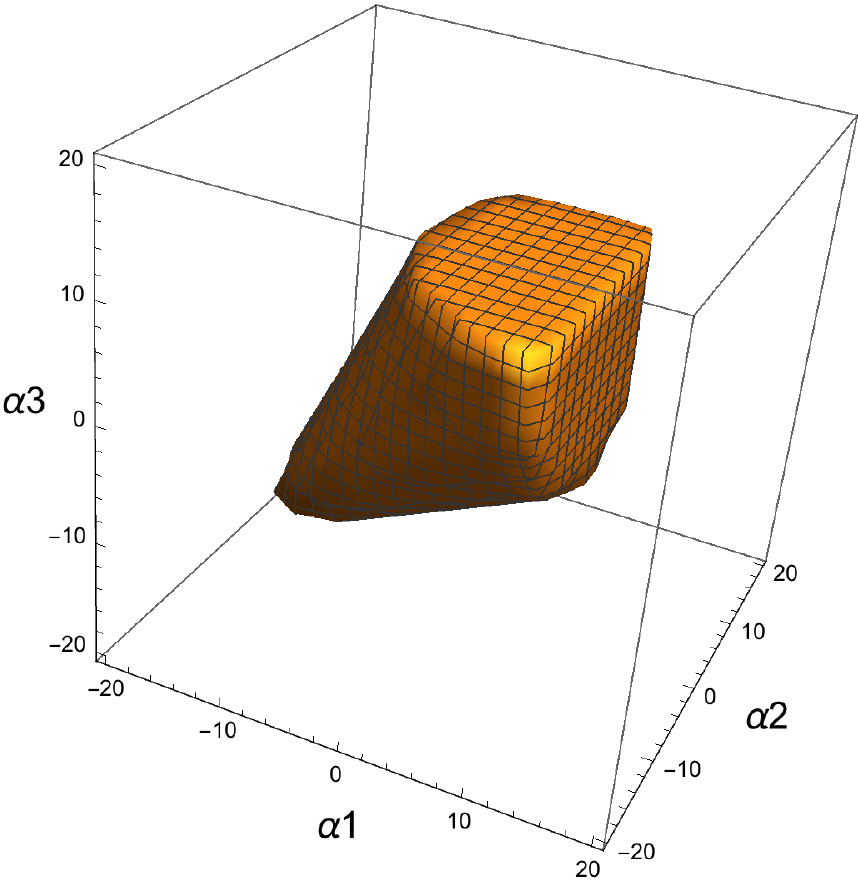}
		\caption{Feasible region for the intermediate control points of $x_r(t)$ while fulfilling the constraints on the roll angle.}\label{FigRegionQuad}
	\end{center}
\end{figure}

\begin{figure}[h!]
	\begin{center}
		\includegraphics[width=5in]{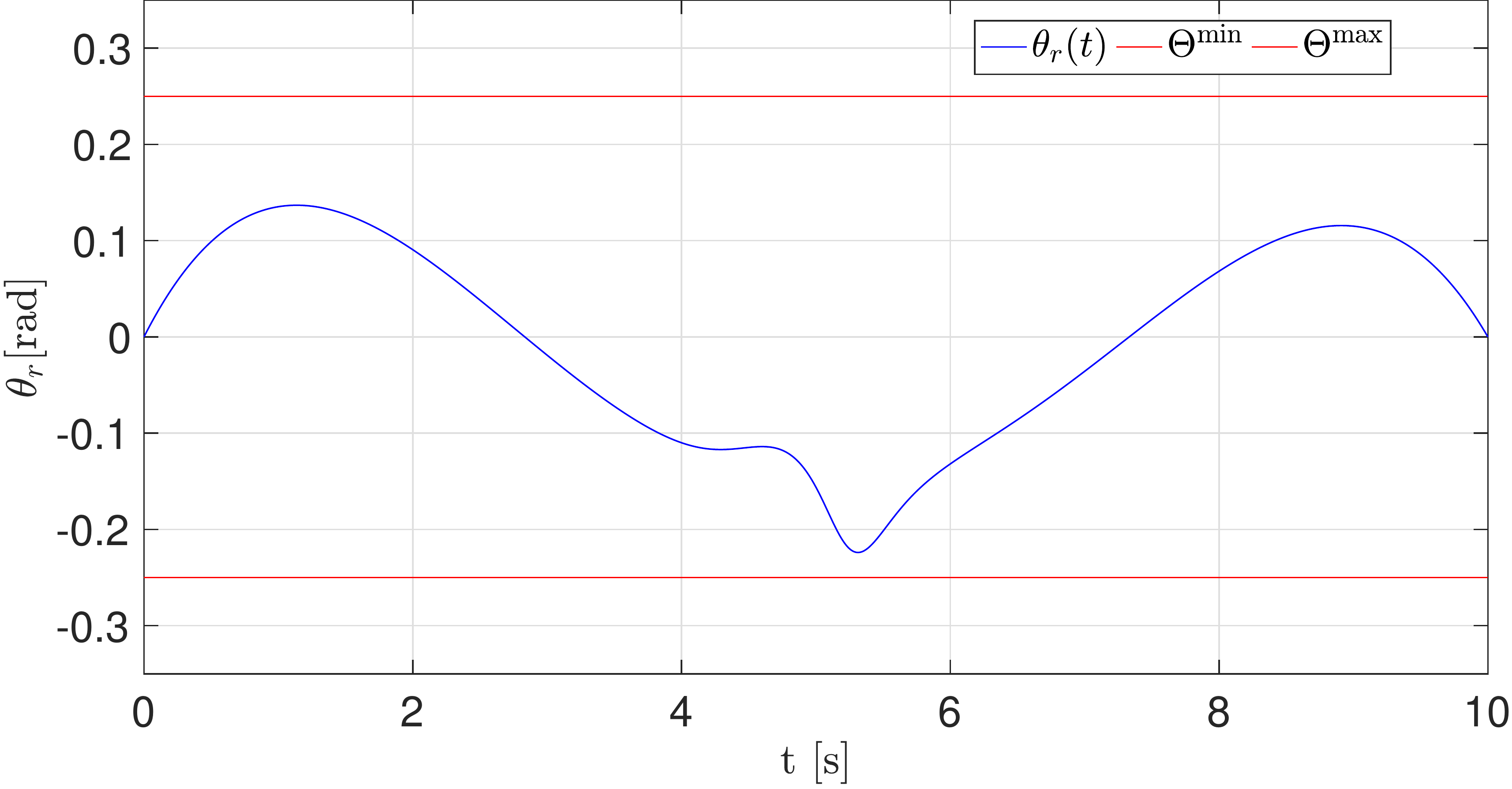}
		\caption{The constraints on the open-loop trajectory $\theta_{r} (t)$ are respected.}\label{FigConstrainedTheta}
	\end{center}
\end{figure} 

\paragraph{Scenario 2:} In this scenario, we discuss the output constraints.

\subsubsection{Constrained open-loop trajectories  $x_r$ and  $y_r$ }
Here we discuss the scenario when the quadrotor has already been take off by an initial Bézier curve that fulfils the previous input/state constraints and avoids the known static obstacles.
 Then, suddenly appear new obstacle in the quadrotor environment. To decide, whether the quadrotor should change its trajectory or continue to follow the initial trajectory, we use the quantitative envelopes of the Bézier trajectory presented in Section \ref{subSecBezierPolygon} to verify if its envelope region overlaps with the regions of the new obstacle.

We construct the quantitative envelopes for $x_r$ and $y_r$ using Section \ref{subSecBezierPolygon}. We find the maximal distance of the Bézier curve w.r.t. to the chosen control polygon. We choose as intermediate control points for $x_r$ and $y_r$ ($\alpha_1=8; \alpha_2=12.5; \alpha_3=9$ and $\beta_1=4; \beta_2=2.5; \beta_3=2$ respectively). The bounded region of the chosen reference trajectories $x_r$ and $y_r$ are depicts in Figure \ref{FigConstrainedXY_time}. 

In particular, the figure \ref{FigConstrainedXY} demonstrates the benefit of the bounded trajectory region. We can precisely determine the distance between the quadrotor pathway and the obstacles.

\begin{figure}[h!]
	\begin{center}
		\includegraphics[width=5in]{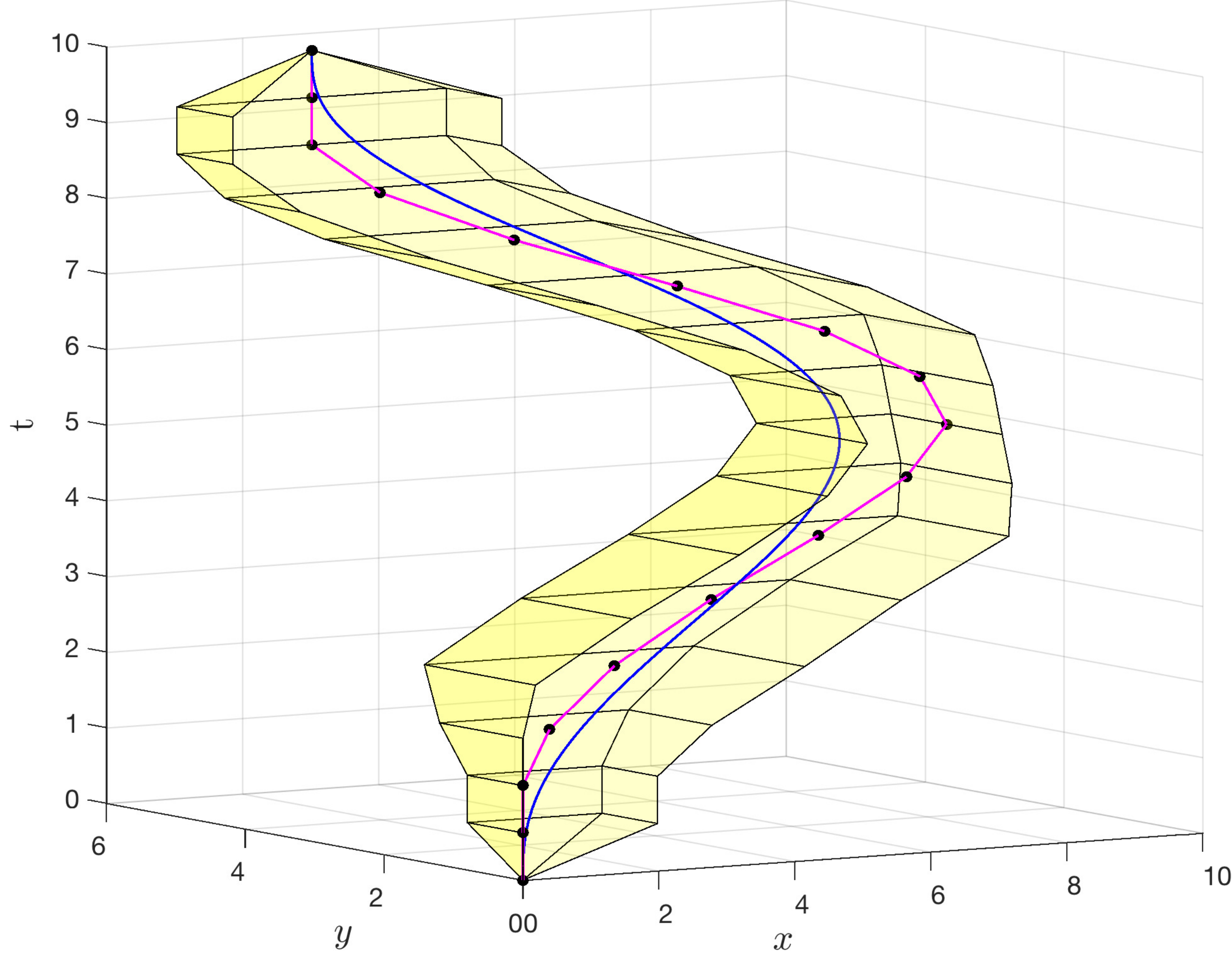}
		\caption{The constrained reference trajectories $x_{r}(t)$ and $y_{r}(t)$ and their quantitative bounded region w.r.t. to the chosen Bézier control polygon.}\label{FigConstrainedXY}
	\end{center}
\end{figure} 

\begin{figure}[h!]
	\begin{center}\hspace*{-15ex}
		\includegraphics[width=6in]{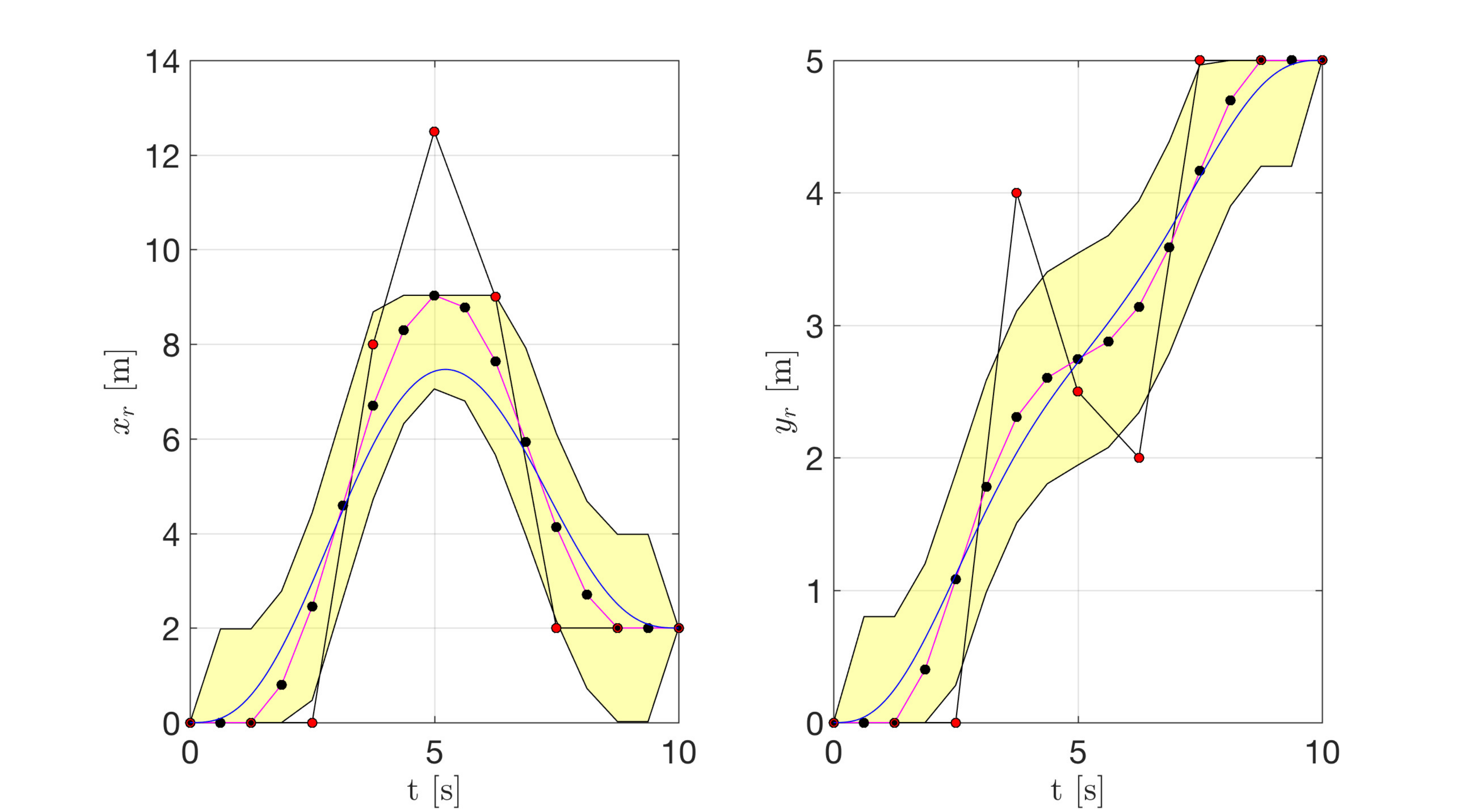}
		\caption{The quantitative envelopes for the reference trajectories $x_{r}(t)$ and $y_{r}(t)$ (the yellow highlighted regions). The augmented control polygons for $x_{r}(t)$ and $y_{r}(t)$ (magenta line). For the simulation, the intermediate control points for $x_r$ and $y_r$ are $\alpha_1=8; \alpha_2=12.5; \alpha_3=9$ and $\beta_1=4; \beta_2=2.5; \beta_3=2$ respectively. }\label{FigConstrainedXY_time}
	\end{center}
\end{figure} 

\paragraph{Scenario 3:}
In this scenario, we discuss the input constraints $u_2$ and $u_3$ when the quadrotor is in \textit{hover mode }\ie moving in a horizontal plane.
\subsubsection{Constrained open-loop trajectories  $u_2$ and $u_3$ }
By the previous constraints on $\theta_r$ and $u_{1r}$, we implicitly constrain the torque input $u_{2r}$. A more general case can also be treated if we assume that when the quadrotor reaches the desired altitude, it moves in a horizontal plane. In that case by having slow dynamics for $z_r (t)$ such that $\ddot z_r \approx 0$, we therefore have:
  \begin{subequations}
\begin{align}
u_{2r}  &=  C_x x_r^{(4)}  \\
u_{3r}  &=  C_y  y_r^{(4)} 
\end{align} 
\end{subequations}
where  $C_x =\dfrac{I_x}{g} $ and  $C_y =-\dfrac{I_y}{g}$ are constants. The latter forms a system of linear inequalities of the control points of $x_r$ and $y_r$.

\subsubsection{Constrained open-loop control for $u_{4r}$}
For $u_{4r}$, we have a simple double integrator as:
\begin{equation}
 u_{4r} = I_z \ddot \psi_r
\end{equation}
To find the regions for control points $a_{\psi i}$, we proceed in the same way as in the previous Section \ref{secU2R}.\\

\begin{remark}Our constrained trajectory reference study provides a set of feasible reference trajectories. Using the simplified models in the trajectory planning helps us to find the reference trajectory conform to the system dynamics constraints.  On the other hand, these models can not serve as a basis for the feedback law design since it will increase the uncertainties and the mismatch with the system. For that purpose, in Chapter 4,  we present the non-linear tracking of the aggressive reference trajectories by using a model-free controller. 
\end{remark}

\section{Closing remarks}
We have presented a control design for non-linear flat systems handling input/state constraints through the reference trajectory design.

The state/input constraints are translated into a \textit{system of inequalities and equalities} where the variables are the Bézier control points. This enables the input/state/output constraints to be considered into the trajectory design in a unified fashion. This allows us to develop a compact methodology to deal both with control limitations and space constraints as those arising in obstacle avoidance problems.

 The core value of this work lies in two important advantages:
\begin{itemize}
	\item The low complexity of the controller; fast real-time algorithms.
	\item The choice \ie the user can select the desired feasible trajectory. The sub-optimality may be seen as a drawback.
\end{itemize}

In the context of trajectory design, we find a successful simpler or approximated semi-algebraic set defined off-line. The closed form solution of the CAD establishes an explicit relationship between the desired constraints and the trajectory parameters. This gives us a rapid insight into how the reference trajectory influences the system behaviour and the constraints fulfillment. Therefore, this method may serve as sensitivity analysis that reflects how the change in the reference trajectory influences the input reference trajectory. Also, for fault-tolerant systems, in spirit of the papers \cite{Mai2007,Theilliol2008,Chamseddine2012, Chamseddine2012a}, this approach may be useful for the control reconfiguration when an actuator \textit{fault} occurs.

Our algorithm can deal with asymmetric constraints that may be useful in many situations e.g., for a vehicle where acceleration is created by a motor, while deceleration is achieved through the use of a mechanical brake. Increasing tracking errors and environment changes are signs that a re-planning of the reference trajectory is needed. Having the symbolic form of the exact solution, allows us a quick re-evaluation over a new range of output constraints, or with a new set of numerical values for the symbolic variables. In such case, the replanning initial conditions are equivalent to the system current state.

\appendix

\section{\label{app:1-BezierGeomSign}Geometrical signification of the Bezier operations }
Here we present the geometrical signification of the degree elevation of the Bezier trajectory $y(t)$ (Figure \ref{fig_DegElevBezier}), the addition (Figure \ref{fig_AdditionBezier}) and the multiplication (Figure \ref{fig_MultiplicationBezier}) of two Bézier trajectories.

\begin{figure}[h!]
	\centering\hspace*{-15ex}
	\includegraphics[width=7in]{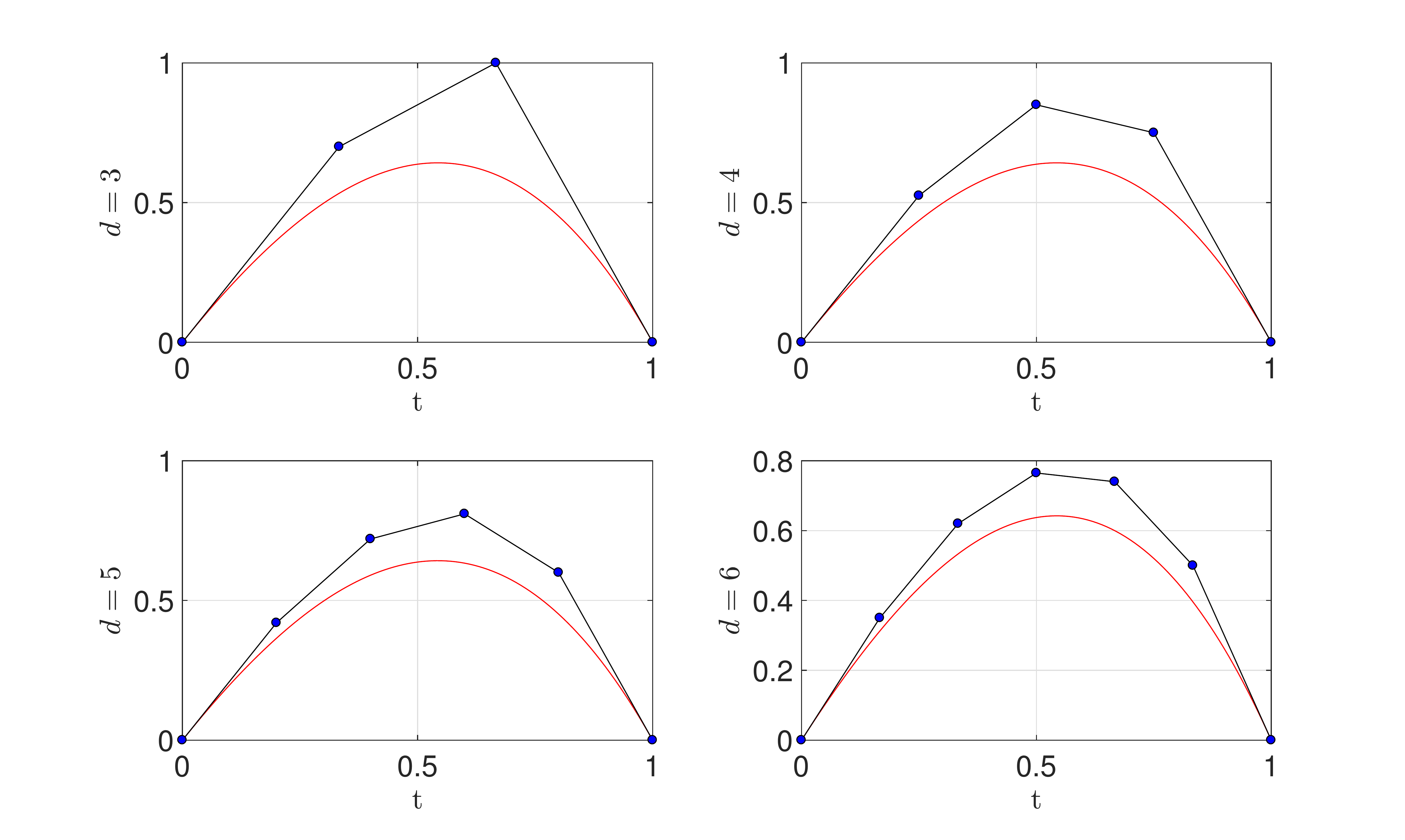}
	\caption{Degree Elevation of Bézier curve.} \label{fig_DegElevBezier}
\end{figure}

\begin{figure}[h!]
	\centering\hspace*{-15ex}
	\includegraphics[width=7in]{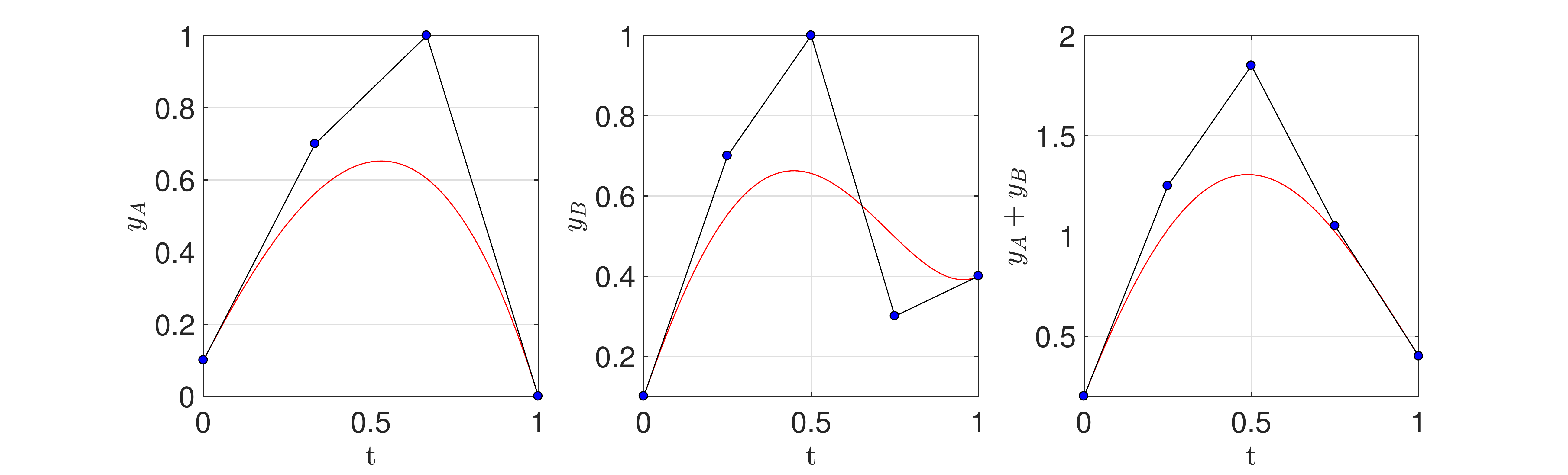}
	\caption{Addition of two Bézier curves.}\label{fig_AdditionBezier}
\end{figure}

\begin{figure}[h!]
	\centering\hspace*{-15ex}
	\includegraphics[width=7in]{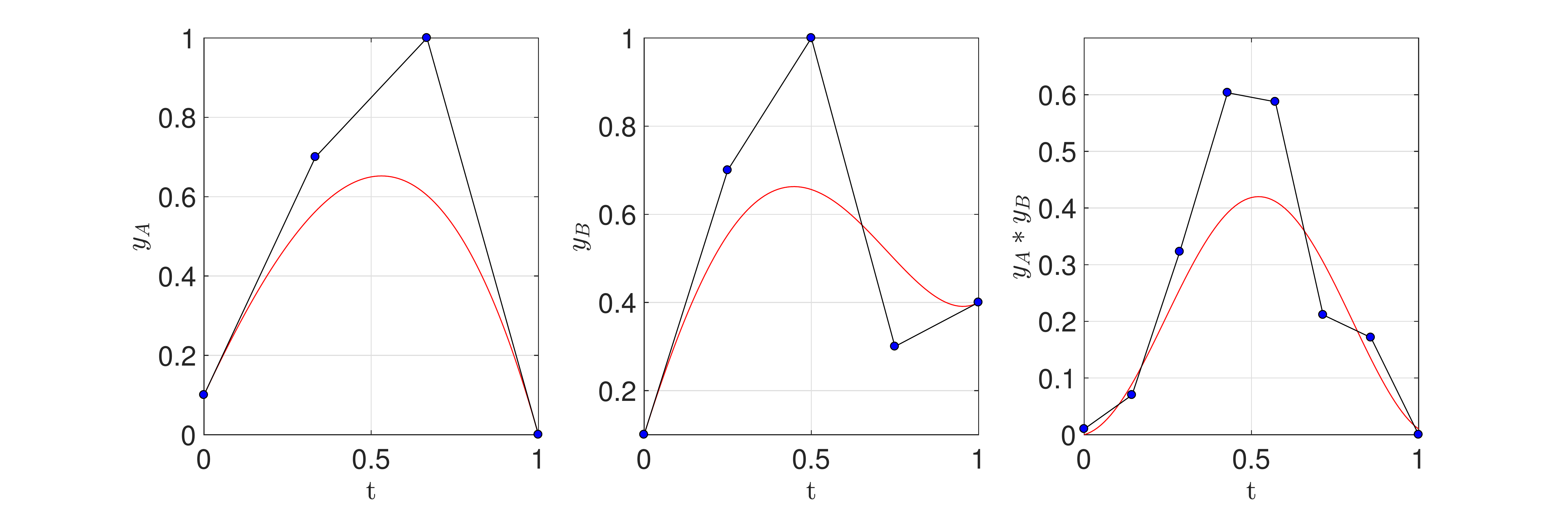}
	\caption{Multiplication of two Bézier curves.}\label{fig_MultiplicationBezier}
\end{figure}

\section{Trajectory Continuity}\label{app:1-TrajConti}

In the context of feedforwarding trajectories, the "degree of continuity" or the smoothness of the reference trajectory (or curve) is one of the most important factors. The smoothness of a trajectory is measured by the number of its continuous derivatives.
We here give some definitions on the trajectory continuity when it is represented by a parametric curve \cite{Barsky1990}.

\textbf{Parametric continuity} A parametric curve $y(t)$ is $n$-th degree continuous in parameter $t$, if its $n$-th derivative  $\frac{d ^{n}y(t)}{dt^{n}}$ is continuous. It is then also called $C^n$ continuous.

The various order of parametric continuity of a curve can be denoted as follows:
\begin{itemize}
	\item $C^0$curve \ie the curve is continuous.
	\item $C^1$curve \ie first derivative of the curve is continuous.  For instance, the velocity is continuous.
	\item $C^2$ curve \ie first and second derivatives of the curve are continuous. (The acceleration is continuous)
	\item $C^3$curve \ie first, second and third derivatives of the curve are continuous. (the jerk is continuous)
	\item $C^n$curve \ie first through $n$th derivatives of the curve are continuous.
\end{itemize}

\begin{example}
	Lets take a linear curve for the joint position of a robot, as:
	$$p(t)= p^i + \frac{p^f -p^i}{T_{\text{tt}}}t$$ where $p^{i}$ is the initial position, $p^{f}$ is the final position and $T_{\text{tt}}$ is the time interval.We obtain for the velocity and the acceleration the following curves:
	\begin{itemize}
		\item for the velocity: $v(t) =\dot p=\frac{p^f -p^i}{T_{\text{tt}}}$
		\item for the acceleration $a(t)= \ddot p = \begin{cases}
		\infty, \quad t= 0, T_{\text{tt}}\\
		0, \quad 0<t<T_{\text{tt}}
		\end{cases}$
	\end{itemize}
	In this example, we can observe infinite accelerations at endpoints and discontinuous velocity when two trajectory segments are connected.
\end{example}



\end{document}